\documentclass[11pt,a4paper]{article}
\usepackage{a4wide}

\newcommand*{\Dataset}[1]{\textsc{#1}}
\usepackage[group-separator={,}]{siunitx}
\usepackage{booktabs}

\newcommand{\J}[1]{#1}

\usepackage{amsmath}
\usepackage{graphicx}
\usepackage{xcolor}
\usepackage{booktabs}
\usepackage{microtype}
\usepackage[natbib,color,cref,pdf,theorems]{nikis}
\usepackage{algorithm}
\usepackage{algpseudocodex}

\definecolor{solarizedYellow}{HTML}{B58900}
\definecolor{solarizedOrange}{HTML}{CB4B16}
\definecolor{solarizedRed}{HTML}{DC322F}
\definecolor{solarizedMagenta}{HTML}{D33682}
\definecolor{solarizedViolet}{HTML}{6C71C4}
\definecolor{solarizedBlue}{HTML}{268BD2}
\definecolor{solarizedCyan}{HTML}{2AA198}
\definecolor{solarizedGreen}{HTML}{859900}

\definecolor{teigiColor}{HTML}{5700B5}
\newcommand*{\teigi}[1]{{\color{teigiColor}\emph{#1}}}

\newcommand*{\PatternLegendS}[1]{\tikz[baseline=.35ex]{\draw[draw,#1] (0, 0) rectangle (.35, .35); }}

\usepackage{tikz}
\usetikzlibrary{patterns}

\newcommand*{\myblock}[1]{\vspace{0.0em}\noindent\textbf{#1.}}

\newcommand*{\intervalI}{\ensuremath{\mathcal{I}}} \newcommand*{\intervalJ}{\ensuremath{\mathcal{J}}} 
\newcommand*{\ibeg}[1]{\ensuremath{\textsf{b}(#1)}}
\newcommand*{\iend}[1]{\ensuremath{\textsf{e}(#1)}}

\newcommand*{\timeDICT}{\ensuremath{t_{\mathcal{D}}}}
\newcommand*{\timeSA}{\ensuremath{t_{\textrm{SA}}}}

\newcommand*{\VSub}{\ensuremath{v_{\intervalI}}}

\newcommand*{\LZDSub}{\ensuremath{z_{\mathup{D}\left[\intervalI\right]}}}
\newcommand*{\LZMWSub}{\ensuremath{z_{\mathup{MW}\left[\intervalI\right]}}}
\newcommand*{\EightSub}[1]{\ensuremath{z_{\mathup{78}\left[#1\right]}}}
\newcommand*{\RelativeSub}[1]{\ensuremath{z_{\mathup{rel}\left[#1\right]}}}
\newcommand*{\SevenSub}[1]{\ensuremath{z_{\mathup{77}\left[#1\right]}}}
\newcommand*{\runSub}[1]{\ensuremath{r_{\mathup{BWT}\left[#1\right]}}}

\newcommand*{\ac}[1]{#1}

\newcommand*{\block}[1]{\paragraph{#1}}

\newcommand*{\dst}{\mathsf{dst}}
\newcommand*{\src}{\mathsf{src}}
\newcommand*{\iLexparse}{\strategyname{lex-parse}}

\newcommand*{\iFPA}{\strategyname{FPA}}
\newcommand*{\iLZWFP}{\strategyname{LZW-FP}}
\newcommand*{\iLZEightFP}{\strategyname{FP78}}
\newcommand*{\iLZEightFPA}{\strategyname{FPA78}}
\newcommand{\RmQ}{\functionname{RmQ}}
\newcommand{\RMQ}{\functionname{RMQ}}

\newcommand*{\strategyname}[1]{\textsf{#1}} \newcommand*{\instancename}[1]{\ensuremath{\mathsf{#1}}} \newcommand*{\functionname}[1]{{{\renewcommand{\rmdefault}{ptm}\fontfamily{ppl}\selectfont\textrm{\textup{#1}}}}} 

\newcommand{\select}[1][]{\operatorname{select}_{#1}}
\newcommand{\rank}[1][]{\operatorname{rank}_{#1}}

\newcommand*{\iTernary}{\strategyname{ternary}}
\newcommand*{\iCics}{\strategyname{cics}}

\newcommand*{\lce}{\functionname{lce}}
\newcommand*{\fnChild}{\functionname{child}}
\newcommand*{\fnSuffixLink}{\functionname{suffixlink}}
\newcommand*{\parent}{\functionname{parent}}
\newcommand*{\fnDepth}{\functionname{depth}}
\newcommand*{\fnSelectLeaf}{\functionname{select\_leaf}}
\newcommand*{\rangeNextValue}{\functionname{range\_next\_value}}
\newcommand*{\strdepth}{\functionname{str\_depth}}
\newcommand*{\lmostleaf}{\ensuremath{\functionname{range}_\textup{L}}}
\newcommand*{\rmostleaf}{\ensuremath{\functionname{range}_\textup{R}}}
\newcommand{\levelanc}   {\functionname{level\_anc}}

\newcommand*{\LPF} {\instancename{LPF}}
\newcommand*{\LCP} {\instancename{LCP}}
\newcommand*{\PLCP}{\instancename{PLCP}}
\newcommand*{\ISA} {\instancename{ISA}}

\newcommand*{\ST}  {\instancename{ST}}
\newcommand*{\SA}  {\instancename{SA}}

\newcommand*{\DICT}  {\ensuremath{\mathcal{D}}}

\newcommand*{\runningExample}{ababbababbabb}

\newcommand*{\IC}{..}\newcommand*{\ICS}{..}

\author{Dominik K\"{o}ppl}
\date{University of Yamanashi, K\={o}fu, Japan}
\title{Substring Compression Variations and LZ78-Derivates\thanks{Parts of this work have already been presented at
    the Data Compression Conference 2024~\cite{koppl24computing}.
}}

\begin{document}
\maketitle
\begin{abstract}
  We propose algorithms computing the semi-greedy Lempel--Ziv 78 (LZ78), 
  the Lempel--Ziv Double (LZD), and the Lempel--Ziv--Miller--Wegman (LZMW)
  factorizations in linear time for integer alphabets. 
  For LZD and LZMW, we additionally propose data structures that can be constructed in linear time, which can solve the substring compression problems for these factorizations in time linear in the output size.
  For substring compression, we give results for lexparse and closed factorizations.
\end{abstract}
{\small
\textbf{Keywords:} lossless data compression, factorization algorithms, substring compression
}

\section{Introduction}

The substring compression problem~\cite{cormode05substringcompression}
is to preprocess a given input text $T$ such that computing a compressed version of a substring of $T[i..j]$ 
can be done efficiently.
This problem has originally been stated for the Lempel--Ziv-77 (LZ77) factorization\J{~\cite{storer78macro}},
but extensions to the generalized LZ77 factorization{~\cite{keller14generalized}}, 
the Lempel--Ziv 78 factorization~\cite{koppl21nonoverlapping},
the run-length encoded Burrows--Wheeler transform (RLBWT)~\cite{babenko15wavelet},
and the relative LZ factorization{~\cite[Sect.~7.3]{dissKociumakaTomasz}}
have been studied.
Given $n$ is the length of $T$,
a trivial solution is to precompute the compressed output of $T[\intervalI]$ for all intervals $\intervalI \subset [1..n]$.
This however gives us already $\Om{n^2}$ solutions to compute and store.
We therefore strive to find data structures within $\oh{n^2}$ words of space, more precisely: \Oh{n \lg n} bits of space, 
that can answer a query in time linear in the output size with a polylogarithmic term on the text length. 
We investigate variations of the LZ78 factorization, namely
LZD~\cite{goto15lzd} and LZMW~\cite{miller85variations}, which have not yet been studied with regard to that aspect.
\subsection{Related Work}
In what follows, we briefly highlight work in the field of substring compression, and then list work related to the LZ78 derivations we study in this paper.

\begin{table}[t]
  \centering
  \caption{Results on various types of substring compression problems.
    For a given query interval $\intervalI$, $z$ denotes the output size (e.g., the number of factors) and $s = |\intervalI|$ the length of the query interval.
    For generalized Lempel--Ziv (gLZ) we have a second query interval $\intervalJ$ with length $s'$.
    $e$ denotes the number of edges of the compressed acyclic word graph (CDAWG)~\cite{blumer85dawg}.
    Space is given in words.
  }
  \label{tabSubstringCompression}
  \begin{tabular}{lllll}
\toprule
    problem & space & constr. time & query time & ref.
\\\midrule
    LZ77 & {\Oh{n \lg^\epsilon n}} & {\Oh{n \lg n}} & {\Oh{z \lg n \lg \lg n}} & \cite{cormode05substringcompression} \\
	 & {\Oh{n \lg^\epsilon n}} & - & {\Oh{z \lg\lg n}} & \cite[Thm.~2]{keller14generalized}\\
	 & {\Oh{n}} &  - &  {\Oh{z \lg^\epsilon n}} & \cite[Thm.~2]{keller14generalized}\\
    gLZ & {\Oh{n \log^\epsilon n}} & - & {\Oh{z \lg (s'/z) \lg\lg n}} & \cite[Thm.~4]{keller14generalized}\\
       & {\Oh{n}} & - & {\Oh{z \lg (s'/z) \lg^\epsilon n}} & \cite[Thm.~4]{keller14generalized}\\
       & {\Oh{n}} & - & {\Oh{z \lg^\epsilon n}} & \cite[Lemma~4]{abedin20rangelcp}\\
    RLBWT & 
    {\Oh{n}} &
    {{\Oh{n \sqrt{\lg n}}}} expected &
      {\Oh{\runSub{\intervalI} \lg |\intervalI|}} &
       \cite{babenko15wavelet} \\
      Lyndon &
    {\Oh{n}} &
    {\Oh{n}} &
    {\Oh{z}}&
      \cite{kociumaka16minsuf}\\
    LZ78&
    {\Oh{n}} &
    {\Oh{n}} &
    {\Oh{z}}&
      \cite{koppl21nonoverlapping}\\
    &
    {\Oh{e}} &
    {\Oh{n\lg n}} &
    {\Oh{z}}&
      \cite{shibata24lz78}\\
    LZD&
    {\Oh{n}} &
    {\Oh{n}} &
    {\Oh{z}}&
    \cref{thmLZD}\\
    LZMW&
    {\Oh{n}} &
    {\Oh{n}} &
    {\Oh{z}}&
    \cref{thmLZMW}\\
    lexparse&
    {\Oh{n}} &
    {\Oh{n \sqrt{\lg n}}} &
    {\Oh{z \lg s}}&
    \cref{thmLexparse}\\
    closed fact.\ &
    {\Oh{n}} &
    {\Oh{n}} &
    {\Oh{z \lg n}}&
    \cref{thmClosedFact}\\
			\bottomrule
\end{tabular}
\end{table}

\myblock{Substring Compression}
\citet{cormode05substringcompression} solved the substring compression problem for \ac{LZ77} with a data structure 
answering the query for $\intervalI$ in \Oh{\SevenSub{\intervalI} \lg n \lg \lg n} time,
where \SevenSub{\intervalI} denotes the number of produced \ac{LZ77} factors of the queried substring~$T[\intervalI]$.
Their data structure uses \Oh{n \lg^\epsilon n} space, and can be constructed in \Oh{n \lg n} time.
This result has been improved by \citet{keller14generalized} to \Oh{\SevenSub{\intervalI} \lg\lg n} query time for the same space, 
or to \Oh{\SevenSub{\intervalI} \lg^\epsilon n} query time for linear space.
They also gave other trade-offs regarding query times and the sizes of the used data structure.
\citeauthor{keller14generalized} also introduced the generalized substring compression query problem,
a variant of the relative LZ compression~\cite{kuruppu10relative}. 
Given two intervals $\intervalI$ and $\intervalJ$ as query input, the task is to compute the Lempel--Ziv parsing of $T[\intervalI]$ relative to $T[\intervalJ]$ (meaning that we compress $T[\intervalI]$ with references only based on substrings in $T[\intervalJ]$).
Their results are similar to the \ac{LZ77} case:
\Oh{\RelativeSub{\intervalI,\intervalJ} \lg (|\intervalJ|/\RelativeSub{\intervalI,\intervalJ}) \lg\lg n} query time for $\Oh{n \log^\epsilon n}$ space, and 
\Oh{\RelativeSub{\intervalI,\intervalJ} \lg (|\intervalJ|/\RelativeSub{\intervalI,\intervalJ}) \lg^\epsilon n} query time for $\Oh{n}$ space,
where $\RelativeSub{\intervalI,\intervalJ}$ denotes the number of factors in the Lempel--Ziv factorization of $T[\intervalI]$ relative to $T[\intervalJ]$.
The generalized LZ77 compression method is also known as relative LZ~\cite{kuruppu10relative}.

The main idea of tackling the problem for \ac{LZ77} 
is to use a data structure answering interval LCP queries, 
which are usually answered by two-dimensional range successor/predecessor data structures.
Much effort has been put in devising such data structures~\cite{patil13rangelcp, amir14rangelcp, amir15rangelcp, abedin20rangelcp}.
Most recently, \citet{matsuda20compressed} proposed a data structure answering an interval LCP query in \Oh{n^\epsilon} time 
while taking \Oh{\epsilon^{-1} n(H_0(T) +1)} bits of space, where $H_0$ denotes the zeroth order empirical entropy.
Therefore, they can implicitly answer an \ac{LZ77} substring compression query in \Oh{\SevenSub{\intervalI} n^\epsilon} time within compressed space.
Recently, \citet{bille20string} proposed data structures storing the \ac{LZ77}-compressed suffixes of~$T$ for answering a pattern matching query of an \ac{LZ77}-compressed pattern~$P$ without decompressing~$P$.
Their proposed data structures seem also to be capable to answer substring compression queries.

The substring compression problem has also been studied~\cite{babenko15wavelet} for another compression technique, 
the RLBWT\J{~\cite{burrows94bwt}}:
\citet{babenko15wavelet} showed how to compute the RLBWT of $T[\intervalI]$ in $\Oh{\runSub{\intervalI} \lg |\intervalI|}$ time,
where $\runSub{\intervalI}$ denotes the number of character runs in the BWT of $T[\intervalI]$.
\J{Another kind of factorization is the Lyndon factorization~\cite{chen58lyndon},
for which \citet{kociumaka16minsuf} gave an algorithm that can compute the Lyndon factorization of a substring in time linear in the number of factors.}
Recently, \citet{koppl21nonoverlapping} proposed data structures for the substring compression problem with respect to the LZ78 factorization.
These data structures can compute the \ac{LZ78} factorization of $T[\intervalI]$
\sitemize*{\item in \Oh{\EightSub{\intervalI}} time using \Oh{n \lg n} bits of space, or
		\item in \Oh{\EightSub{\intervalI} (\log_\sigma n + \lg \EightSub{\intervalI})} time using \Oh{n \lg \sigma} bits of space,
}
where \EightSub{\intervalI} is the number of computed \ac{LZ78} factors, and $\epsilon \in (0,1]$ a selectable constant.
	Finally, \citet{kociumaka23internal} studied tools for internal queries that can be constructed in $\Oh{n/\log_\sigma n}$ time optimally in the word RAM model, which also led to new complexities for the LZ77 substring compression.

Another related research field covers internal queries such as queries for the longest palindrome~\cite{amir20dynamic,mitani23internal} or longest common substring,
pattern matching~\cite{kociumaka15internal},
counting of distinct patterns~\cite{charalampopoulos20counting}, quasi-periodicity testing~\cite{crochemore20internal}, and range shortest unique substring queries~\cite{abedin19rangesus},
all inside a selected substring of the indexed text.

\myblock{LZ78 Derivates}
In this paper, we highlight three factorizations deriving from the LZ78 factorization:
(a) the flexible parsing variants~\cite{horspool95effect,matias99optimality,matias01effect} of LZ78,
(b) LZMW~\cite{miller85variations} and 
(c) LZD~\cite{goto15lzd}.
The first (a) achieves the fewest number of factors among all LZ78 parsings that have an additional choice instead of strictly following the greedy strategy to extend the longest possible factor by one additional character.
The last two factorizations (b) and (c) let factors refer to \emph{two} previous factors, 
and are noteworthy variations of the LZ78 factorization.
For instance, LZ78 factorizes $T = \texttt{a}^n$ into $\Ot{\sqrt{n}}$ factors, while both 
variations have $\Ot{\lg n}$ factors, where the factor lengths scale with a power of two or with the Fibonacci numbers for LZD and LZMW, respectively.
That is because LZD allows the selection of two references making it possible to form factors of the form $(\texttt{a}^k, \texttt{a}^k) = \texttt{a}^{k+1}$,
while LZMW requires such a selection to be for subsequent factors.
Here, the best is to factorize $\texttt{a}^n$ in lengths of the Fibonacci numbers 
$\texttt{a}^1, \texttt{a}^1, \texttt{a}^2, \texttt{a}^3, \texttt{a}^5, \texttt{a}^8, \texttt{a}^{13}, \ldots$ since then the length of the two preceding factors is maximized.
For the Fibonacci numbers it is known that they grow exponentially such that the number of LZMW factors of $\texttt{a}^n$ is $\Ot{\lg n}$.
The best lower bound on the number of LZMW factors with binary alphabets can be achieved with a cousin $G_k$ of the Fibonacci words, defined by $G_k = G_{k-2} G_{k-1}$, $G_1 = \texttt{a}$, and $G_2 = \texttt{b}$.
Then the concatenated text $G_1 G_2 \cdots G_k$ has $k$ LZMW factors.

On the downside, \citet[Thm~1.1]{de24grammar} have shown that, while random access (i.e., accessing a character of the original input string) is possible in \Oh{\lg \lg n} time on LZ78-compressed strings,
this is not possible on LZD-compressed strings without increasing the space significantly.
With respect to factorization algorithms,
\citet{goto15lzd} can compute LZD in \Oh{n \lg \sigma} time with \Oh{n \lg n} bits of space, or in \Oh{z \lg n} bits of space.
LZD and LZMW were studied by \citet{badkobeh17two}, 
who gave a bound of $\Om{n^{5/4}}$ time for the latter factorization algorithm~\cite{goto15lzd} using only $\Ot{z \lg n}$ bits of working space, where $z$ denotes the output size of the respective factorization.
Interestingly, the same lower bound holds for the original LZMW algorithm.
They also gave Las Vegas algorithms for computing the factorizations in $\Oh{n + z \lg^2 n}$ expected time and $\Oh{z \lg^2 n}$ space.

\myblock{Our Contribution}
In what follows, we propose construction algorithms for the three aforementioned types of factorizations.
For LZD and LZMW, we also study their substring compression problem.
Regarding these two factorizations, despite having an $\Om{n^{5/4}}$ time bound on the running time of the known deterministic algorithms, 
we leverage the data structure of~\cite{koppl21nonoverlapping} to answer a
substring compression query for each of the two factorizations in \Oh{z} time using \Oh{n \lg n} bits of space, where $z$ again denotes the number of factors of the corresponding factorization.
Since the used data structure can be computed in linear time, this also leads to the first deterministic linear-time computation of LZD and LZMW;
the aforementioned Las Vegas algorithm of \citet{badkobeh17two} is only linear \emph{in expectancy} for $z \in \oh{n/ \lg^2 n}$\footnote{Assuming $z \in \oh{n/ \lg^2 n}$ is reasonable for relatively compressible strings.}.
Additionally, we can compute the flexible parse of LZ78 in the same complexities, or 
in $\Oh{n (\lg z + \log_\sigma n)}$ time within $\Oh{n \lg \sigma}$ bits of space.
This holds if we work with the flexible parsing variant of \citet{horspool95effect} or \citet{matias99optimality}.
Best previous results have linear expected time or are only linear for constant alphabet sizes~\cite{matias01effect}.

\block{Structure of this Article}
After the preliminaries in \cref{secPrelimanaries}, we focus on the substring compression for lexparse (\cref{secLexparse}) and the closed factorizations (\cref{secClosedFact}),
which serve as an appetizer for what follows.
For the closed factorization, we also make our first acquaintance with a solution using a suffix tree.
Subsequently, in \cref{secSemiGreedy}, we introduce the flexible parsing in two different flavors for LZ78,
and show how to compute both parsings with the AC-automaton and suffix trees in \cref{secACautomaton} and \cref{secFlexParseLinTime}, respectively.
The final part of the theoretical analysis is \cref{lzEightVariations}, in which we propose suffix tree-based solutions for computing the LZD and LZMW parsings.
The remainder of this article covers two practical benchmarks in \cref{secEvaluation}.
First, we study the number of factors of the two flexible parsing variations compared to standard LZ78 in \cref{secEvalFlexParse}.
Second, we evaluate the computation time for LZ78 substring compression when using an already computed suffix tree versus a standard LZ78 compressor in \cref{secEvalEight}.
A conclusion with outlook for future work and open problems ends the article in \cref{secOutlook}.
Compared to the conference version~\cite{koppl24computing}, we added the closed factorizations, the approach with the AC automaton,
improved the structure by adding more examples and figures, and cover an evaluation in \cref{secEvaluation}.

\section{Preliminaries}\label{secPrelimanaries}
With $\lg$ we denote the logarithm~$\log_2$ to base two.
Our computational model is the word RAM model with machine word size $\Om{\lg n}$ bits for a given input size~$n$.
Accessing a word costs $\Oh{1}$ time.

Let $T$ be a text of length $n$ whose characters are drawn from an integer alphabet $\Sigma= [1\IC{}\sigma]$ 
with $\sigma \le n^{\Oh{1}}$.
Given $X,Y,Z \in \Sigma^*$ with $T = XYZ$, 
then $X$, $Y$ and $Z$ are called a \teigi{prefix}, \teigi{substring} and \teigi{suffix} of $T$, respectively.
A substring $Y$ of $T$ is called \teigi{proper} if $Y \neq T$.
We call $T[i\ICS{}]$ the $i$-th suffix of $T$, and denote a substring $T[i] \cdots T[j]$ with $T[i\IC{}j]$.
A \teigi{parsing dictionary} is a set of strings.
A parsing dictionary~$\mathcal{D}$ is called \teigi{prefix-closed} if it contains, for each string $S \in \mathcal{D}$,
all prefixes of $S$ as well.
A \teigi{factorization} of $T$ of size~$z$ partitions $T$ into $z$~substrings $F_1 \cdots F_z = T$.
Each such substring $F_x$ is called a \teigi{factor}.
Let $\dst_x := |F_1..F_{x-1}|+1$ denote the starting position of factor~$F_x$.

Stipulating that $F_0$ is the empty string,
a factorization $F_1\cdots F_z = T$ is called the \teigi{LZ78 factorization}\J{~\cite{ziv78lz}} of $T$ iff, for all $x \in [1..z]$,
the factor~$F_x$ is the longest prefix of $T[|F_1\cdots F_{x-1}|+1..]$ such that $F_x = F_y \cdot c$ for some $y \in [0..x-1]$ and $c \in \Sigma$,
that is, $F_x$ is the longest possible previous factor $F_y$ appended by the following character in the text.
The dictionary for computing $F_x$ is $\DICT = \{F_y \cdot c : y \in [0..x-1], c \in \Sigma \}$.
Formally,
$y = \argmax \{ \abs{F_{y'}} : F_{y'} = T[\dst_x..\dst_x+|F_{y'}|-1]\}$.
We say that $y$ and $F_y$ are the \teigi{referred index} and the \teigi{referred factor} of the factor~$F_x$, respectively.
We call a factor of length one \teigi{literal}; such a factor has always the referred index~$0$.
The other factors are called \teigi{referencing}.
See \cref{figFactLZ78} for an example of the LZ78 factorization.

\begin{figure}
  \begin{minipage}{0.45\linewidth}
  \includegraphics[page=1]{./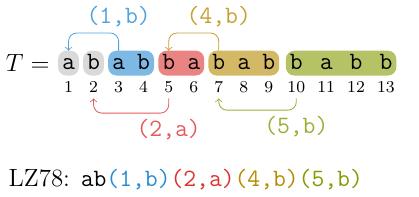}
  \end{minipage}
  \begin{minipage}{0.5\linewidth}
  \caption{LZ78 factorization of $T = \texttt{ababbababbabb}$,
    given by
    $F_1 =\texttt{a}$,
    $F_2 =\texttt{b}$,
    $F_3 =\texttt{ab}   = F_1 \cdot \texttt{b}$,
    $F_4 =\texttt{ba}   = F_2 \cdot \texttt{a}$,
    $F_5 =\texttt{bab}  = F_4 \cdot \texttt{b}$,  and
    $F_6 =\texttt{babb} = F_5 \cdot \texttt{b}$.
    The figure shows an encoding of the factors, where a number denotes the index of the referred factor.
  }
  \label{figFactLZ78}
  \end{minipage}
\end{figure}

\begin{figure}[htpb]
  \begin{minipage}{0.38\linewidth}
  \includegraphics{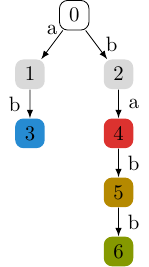}
  \end{minipage}
  \begin{minipage}{0.6\linewidth}
  \caption{The LZ trie of our running example $T = \texttt{\runningExample}$. 
    The coloring of the LZ trie matches the colors of the LZ78 factors in \cref{figFactLZ78} visualizing the LZ78 factorization of~$T$.
  }
  \label{figLZTrie}
  \end{minipage}
\end{figure}

All factors $F_x$ are distinct except maybe the last factor~$F_z$,
which needs to be treated as a border case. 
In what follows, we omit this border case analysis for the sake of simplicity (in LZ78 as well as in all later explained variants).
If $T$ terminates with a character appearing nowhere else in $T$, then the last factor is also distinct from the others.
The LZ trie represents each LZ factor as a node (the root represents the factor~$F_0$).
The node representing the factor $F_y$ has a child representing the factor $F_x$ connected with an edge labeled by a character~$c \in \Sigma$ if and only if $F_x = F_y c$.
See \cref{figLZTrie} for the LZ trie of our running example.

When computing $F_x$, the LZ78 parsing dictionary \DICT{} contains all previous factors (including $F_0$) with all possible character extensions such that determining $F_x$ can be reduced to finding the longest element $F_y \cdot c$ in \DICT{} that matches with $F_x$.

An \teigi{interval} $\intervalI=[b..e]$ is the set of consecutive integers from $b$ to $e$, for $b\le e$. 
For an interval $\intervalI$, we use the notations $\ibeg{\intervalI}$ and $\iend{\intervalI}$ to denote the beginning and the end of $\intervalI$,
i.e., $\intervalI = [\ibeg{\intervalI}..\iend{\intervalI}]$.
We write $\abs{\intervalI}$ to denote the length of $\intervalI$; i.e., $\abs{\intervalI}=\iend{\intervalI}-\ibeg{\intervalI}+1$,
and $T[\intervalI]$ for the substring $T[b..e]$.

\myblock{Text Data Structures}
Let $\SA$ denote the \teigi{suffix array}~\cite{manber93sa} of $T$.
The entry~$\SA[i]$ is the starting position of the $i$-th lexicographically smallest suffix such that $T[\SA[i] \IC] \prec T[\SA[i+1] \IC]$ for all integers~$i \in [1\IC{}n-1]$.
Let $\ISA$ of $T$ be the inverse of $\SA$, i.e., $\ISA[\SA[i]] = i$ for every $i \in [1\IC{}n]$.
The \teigi{LCP array} is an array with the property that $\LCP[i]$ is the length of the longest common prefix (LCP) of $T[\SA[i] \IC]$ and $T[\SA[i-1] \IC]$ for every $i \in [2\IC{}n]$.
For convenience, we stipulate that $\LCP[1] := 0$.
The array~$\Phi$ is defined as $\Phi[i] := \SA[\ISA[i]-1]$, and $\Phi[i] := n$ in case that $\ISA[i] = 1$.
The \teigi{permuted LCP} (\teigi{PLCP}) \teigi{array} $\PLCP$ stores the entries of $\LCP$ in text order, i.e., $\PLCP[\SA[i]] = \LCP[i]$.
\J{See \cref{tabDatastructures} for an example of some of the introduced data structures.}
Given an integer array $Z$ of length $n$ and an interval $[i..j] \subset [1..n]$,
the range minimum query $\RmQ_Z(i, j)$ (resp.\ the range maximum query $\RMQ_Z(i, j)$)
asks for the index~$p$ of a minimum element (resp.\ a maximum element) of
the subarray $Z[i..j]$, i.e.,
$p \in \arg\min_{i\leq k \leq j} Z[k]$, or respectively $p \in \arg\max_{i\leq k \leq j} Z[k]$.
We use the following data structure to handle this kind of query:

\begin{lemma}[\cite{davoodi12rmq}]\label{lemRMQ}
    Given an integer array $Z$ of length $n$,
    there is an $\RmQ$ (resp.\ $\RMQ$) data structure taking $2n+o(n)$ bits of space that can answer an $\RmQ$ (resp.\ $\RMQ$) query on $Z$ in constant time.
    This data structure can be constructed in $\Oh{n}$ time with $o(n)$ bits of additional working space.
\end{lemma}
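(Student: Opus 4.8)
The plan is to recall the standard route to succinct range-minimum (and, symmetrically, range-maximum) structures, since this is the known result of \citet{davoodi12rmq}. The first observation is that the answer to $\RmQ_Z(i,j)$ depends only on the shape of the \emph{Cartesian tree} of $Z$ — the binary tree whose root is the position of the leftmost minimum of $Z$, with left and right subtrees built recursively on $Z[1..p-1]$ and $Z[p+1..n]$ — and not on the concrete values: $\RmQ_Z(i,j)$ is precisely the position whose node is the lowest common ancestor of the $i$-th and $j$-th nodes in the symmetric (in-order) traversal. Hence it suffices to store a topology-only encoding of this binary tree, support in-order $\select$, and answer lowest-common-ancestor queries on it.

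Second, I would encode the Cartesian tree in $2n+o(n)$ bits using a balanced-parentheses (equivalently DFUDS) representation, augmented by a range min-max tree over its excess sequence; this supports navigation, in-order $\select$, and lowest common ancestor in $\Oh{1}$ time using only $o(n)$ additional bits, and translating such a node back to an array index is a constant number of $\rank$/$\select$ operations. The $\RMQ$ case follows by the symmetric construction (the Cartesian tree that places maxima at the top), so it is enough to handle $\RmQ$.

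Third, and this is the part that needs care, I must build the parenthesis sequence in $\Oh{n}$ time using only $o(n)$ bits of working space on top of the output. The textbook incremental Cartesian-tree construction scans $Z$ left to right while maintaining the stack of nodes on the current rightmost path; each new element pops a (possibly long) suffix of that stack, so the total time is $\Oh{n}$ amortized, but the stack itself may hold $\Theta(n)$ nodes, i.e.\ $\Theta(n)$ words. The trick is that this rightmost-path stack is already implicitly recorded in the prefix of the parenthesis string produced so far: a node sits on the stack exactly when its opening parenthesis is still unmatched. I would therefore emit the output directly during the scan (for DFUDS, after a right-to-left pass, as in Fischer's two-dimensional Min-Heap construction), realizing ``pop the top of the stack'' as ``find the last unmatched opening parenthesis'', a single $\Oh{1}$-time operation maintained online over the partially written bitstream with an $o(n)$-bit helper structure.

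The main obstacle is exactly this last step: carrying out the Cartesian-tree construction \emph{in place} within the output bitstream, so that no explicit $\Theta(n)$-word stack or pointer-based tree is ever materialized, while keeping the running time linear and the auxiliary space $o(n)$. Everything else — the reduction of $\RmQ$ to lowest common ancestor, the succinct parenthesis encoding with its $o(n)$-bit query index, and the symmetric reduction of $\RMQ$ — is routine given the cited succinct-tree machinery.
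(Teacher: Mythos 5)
This lemma is imported verbatim from \citet{davoodi12rmq}; the paper gives no proof of its own, so there is nothing internal to compare against. Your sketch is a correct reconstruction of the standard argument behind the cited result: the reduction of \RmQ{} to lowest common ancestors in the Cartesian tree, the $2n+o(n)$-bit balanced-parentheses encoding with an $o(n)$-bit query index, and — the actual contribution of \citet{davoodi12rmq} — replacing the $\Theta(n)$-word rightmost-path stack by the unmatched open parentheses of the already-emitted output, so that construction runs in $\Oh{n}$ time with $o(n)$ bits of working space. You correctly identify that last step as the only non-routine part, and your treatment of it matches the approach of the cited source.
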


An \teigi{LCE} (\teigi{longest common extension}) query $\lce(i,j)$ asks for the longest common prefix of two suffixes $T[i..]$ and $T[j..]$.
We can answer LCE queries in $\Oh{1}$-query time with an \RmQ{} data structure built on the LCP array 
because $\lce(i,j)$ is the LCP value of the \RmQ{} in the range $[\ISA[i]+1..\ISA[j]]$.

Given a character $\texttt{c} \in \Sigma$, and an integer~$j$, 
the \teigi{rank} query $T.\rank[\texttt{c}](j)$ counts the occurrences of \texttt{c} in $T[1\IC{}j]$, and 
the \teigi{select} query $T.\select[\texttt{c}](j)$ gives the position of the $j$-th \texttt{c} in $T$, if it exists.
We stipulate that $\rank[\texttt{c}](0) = \select[\texttt{c}](0) = 0$.
If the alphabet is binary, i.e., when $T$ is a bit vector,
there are data structures~\cite{jacobson89rank,clark96select} that use \oh{|T|} extra bits of space, and
can compute $\rank{}$ and $\select{}$ in constant time, respectively.
There are representations~\cite{baumann19rankselect} with the same constant-time bounds that can be constructed in time linear in $|T|$. 
We say that a bit vector has a \teigi{rank-support} and a \teigi{select-support} if it is endowed 
by data structures providing constant time access to $\rank$ and $\select$, respectively.

\myblock{Suffix Tree}
From now on, we assume that we have appended, to the input text $T$, a special character~$\texttt{\$}$ 
smaller than all other characters appearing in~$T$ that is not subject to a query.
By the property of $\texttt{\$}$, there is no suffix of~$T$ having another suffix of~$T$ as a prefix.
The \teigi{suffix trie} of $T$ is the trie of all suffixes of $T$. 
There is a one-to-one relationship between the suffix trie leaves and the suffixes of~$T$.
The \teigi{suffix tree}~\ST{} of $T$ is the tree obtained by compacting the suffix trie of $T$.
Like the suffix trie, the suffix tree has $n$ leaves, but the number of internal nodes of the suffix tree is at most $n$ because every \ST{} node is branching.
The string stored in a suffix tree edge~$e$ is called the \teigi{label} of $e$.
The \teigi{string label} of a node $v$ is defined as the concatenation of all edge labels on the path from the root to $v$;
its \teigi{string depth}, denoted by $\strdepth(v)$, is the length of its string label.
The leaf corresponding to the $i$-th suffix~$T[i\ICS{}]$ is labeled with the \teigi{suffix number}~$i \in [1\IC{}n]$.
The \teigi{leaf-rank} is the preorder rank ($\in [1\IC{}n]$) of a leaf among the set of all \ST{} leaves. For instance, the leftmost leaf in \ST{} has leaf-rank 1, while the rightmost leaf has leaf-rank~$n$.
Reading the suffix numbers stored in the leaves of \ST{} in leaf-rank order gives the suffix array.
\cref{figSuffixTree} depicts the suffix tree of our running example.

Suffix trees can be computed in linear time~\cite{farach-colton00st}, take $\Oh{n \lg n}$ bits of space,
and can be augmented to support the following operations in constant time (cf.~\cite{fischer18lz} for details):

\begin{itemize}
  \item $\fnDepth(v)$ returns the depth of an \ST{} node~$v$.
  \item $\lmostleaf(v)$ and $\rmostleaf(v)$ return the leaf-rank of the leftmost and the rightmost leaf of the subtree rooted at an \ST{} node~$v$, respectively.
  \item $\levelanc(\lambda,d)$ selects the ancestor of the \ST{} leaf~$\lambda$ at depth~$d$.
\end{itemize}

We further want to answer the following query.

\begin{itemize}
  \item $\strdepth(v)$ returns the string depth of an \emph{internal} node.
\end{itemize}
Within $\Oh{n \lg n}$ bits of space, we can answer $\strdepth(v)$ by taking the minimal \LCP{} value in the \SA{} range representing the leaves in $v$'s subtree, i.e.,
the value of $\LCP.\RmQ[\lmostleaf(v)+1..\rmostleaf(v)]$.
That is because, for each $i \in [\lmostleaf(v)..\rmostleaf(v)]$, the string label of $v$ is a prefix of the suffix $T[i..]$, but not prefix of any other suffix.
When storing $\LCP$ is too costly, we can get access to $\LCP$ with $\PLCP$ and the compressed suffix array~\cite{grossi05csa} within $\Oh{\log_\sigma n}$ access time.
Stipulating that \timeSA{} denotes the time to access \SA{}, we can answer $\strdepth(v)$ in \Oh{\timeSA} time.
Without $\LCP$ and the full suffix array, there is an $\Oh{n \lg \sigma}$-bits representation of the suffix tree~\cite{munro17cst,fischer18lz}, which can be computed in linear time.
In what follows, we analyze our algorithms using suffix trees under both settings, having $\Oh{n \lg n}$ bits with the above queries at constant time, 
or $\Oh{n \lg \sigma}$ bits with a slower $\Oh{\log_\sigma n}$ time for $\strdepth(v)$.
Similarly, we can perform the following operation in \Oh{\timeSA} time:
\begin{itemize}
  \item $\fnSelectLeaf(i)$ returns the \ST{} leaf with suffix number~$i$.
\end{itemize}

\begin{figure}[t]
  \begin{minipage}{0.45\linewidth}
  \includegraphics[width=\textwidth,page=3]{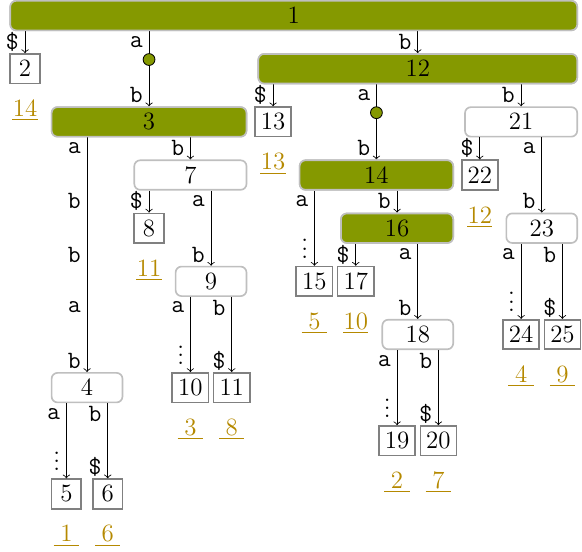}
  \end{minipage}
  \hfill
  \begin{minipage}{0.45\linewidth}
    \caption{The suffix tree of $T = \texttt{\runningExample}$.
    Edges to leaves with labels longer than two are trimmed with `$\vdots$` to save space on the paper.
    Nodes are labeled by their preorder ranks.
    Each leaf is decorated with its suffix number with a golden (\protect\PatternLegendS{fill=solarizedYellow}) underlined number beneath.
  }
  \label{figSuffixTree}
  \end{minipage}
\end{figure}

As a warm-up we start with the substring compression for lexparse and closed factorizations.

\begin{table}
  \centering
  \caption{Some introduced text data structures on the string $T = \texttt{ababbababbabb}$.}
  \label{tabDatastructures}
  \begin{tabular}{l*{14}{r}}
    \toprule
$i$&1&2&3&4&5&6&7&8&9&10&11&12&13\\
\midrule
$T$&a&b&a&b&b&a&b&a&b&b&a&b&b\\
$\ISA$&1&9&4&12&7&2&10&5&13&8&3&11&6\\
$\Phi$&-&10&11&12&13&1&2&3&4&5&6&7&8\\
$\PLCP$&0&4&3&2&1&7&6&5&4&3&2&1&0\\
\midrule
$\SA$&1&6&11&3&8&13&5&10&2&7&12&4&9\\
$\LCP$&0&7&2&3&5&0&1&3&4&6&1&2&4\\
\bottomrule
  \end{tabular}
\end{table}

\begin{figure}
  \begin{minipage}{0.48\linewidth}
  \includegraphics[page=4]{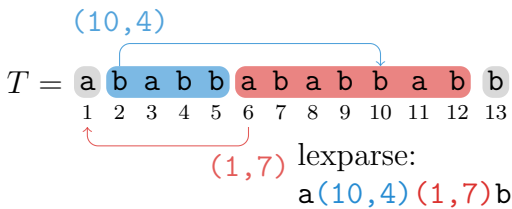}
  \end{minipage}
  \hfill
  \begin{minipage}{0.50\linewidth}
  \caption{lexparse of our running example $T = \texttt{ababbababbabb}$,
    given by
    $F_1 =\texttt{a}$,
    $F_2 =\texttt{babb} = T[10..13]$,
    $F_3 =\texttt{ababbab} = T[1..7]$, and
    $F_4 =\texttt{b}$.
    Non-literal factors are encoded by text position and length of the substring in $T$ they refer to.
  }
  \label{figFactLexparse}
  \end{minipage}
\end{figure}

\section{Warm-Ups: Lexparse and Closed Factorization}
The \teigi{\iLexparse{}}~\cite[Def.~11]{navarro21approximation} is a factorization $T = F_1 \cdots F_v$ such that $F_x = T[\dst_x \IC{} \dst_x + \ell_x -1]$ 
with $\dst_1 = 1$ and $\dst_{x+1} = \dst_x + \ell_x$ if $\ell_x := \PLCP[\dst_x] \ge 1$,
or $F_x$ is a \teigi{literal factor} with $\ell_x := |F_x| = 1$ otherwise.
For $\PLCP[\dst_x] \ge 1$, the reference of $F_x$ is $\Phi(F_x)$ since by definition of the PLCP array
$T[\dst_x \IC{} \dst_x + \ell_x -1] = T[\Phi(\dst_x) \IC{} \Phi(\dst_x) + \ell_x -1] $.
See \cref{figFactLexparse} for an example.
\J{Among all factorizations based on the selection of a lexicographically preceding position as factor reference, \iLexparse{} produces the least number of factors~\cite{navarro21approximation} and is therefore a lower bound for other factorizations based on the lexicographic order such as lcpcomp~\cite{dinklage17tudocomp}
 or plcpcomp~\cite{dinklage19plcpcomp}.
}

\teigi{Closed factorizations}~\cite{badkobeh16closed} create factors that are \teigi{closed};
A factor $F$ is closed if $F$ has a proper substring that is both a prefix and a suffix of $F$ but nowhere else appears in $F$.
\citeauthor{badkobeh16closed} studied two such factorization.
First, the \teigi{longest closed factorization} factorizes $T$ into $T = F_1 \cdots F_z$ such that $F_x$ is the \emph{longest} prefix of
$T[|F_1|\cdots |F_{x-1}|+1 .. n]$ that is closed.
The longest closed factorization of $T = \mathtt{\runningExample{}}$ is $T = F_1 \cdot F_2 = \texttt{ababbababbab} \cdot \texttt{b}$,
where the border \texttt{ababbab} of $F_1$ nowhere else appears in $F_1$.
Second, the \teigi{shortest closed factorization} factorizes $T$ into $T = F_1 \cdots F_z$ such that $F_x$ is the \emph{shortest} prefix of 
$T[|F_1|\cdots |F_{x-1}|+1 .. n]$ that is closed and is of length at least 2.
The shortest closed factorization may not exist, for instance if $T$ starts with a unique character.
The shortest closed factorization of $T = \mathtt{\runningExample{}a}$ is $T = \texttt{aba} \cdot \texttt{bb} \cdot \texttt{aba} \cdot \texttt{bb} \cdot \texttt{abba}$.

\subsection{Lexparse}\label{secLexparse}

\J{Despite being perceived that both $\Phi$ and $\PLCP$ seem necessary for a linear-time computation~\cite{koppl22computing},
it actually suffices to have sequential access to $\Phi$ and random access to the text.
}
That is because we can naively compute $\PLCP[\dst] = \lce(\dst, \Phi(\dst))$ in $\Oh{{\PLCP[\dst]}}$ time by linearly scanning the character pairs in the text.
The number of scanned character pairs sum up to at most $2n$.
This however scales linearly in the interval length.
Here, our idea is to use LCE queries to speed up the factor-length computation.
To find the reference position, i.e., the position from where we want to compute the LCE queries, we use the following data structure.

\begin{lemma}[{\cite[Theorem~3.8]{babenko15wavelet}}] \label{thmInternalWavelet}
There is a data structure that can give, for an interval~\intervalI{},
the $k$-th smallest suffix of $T[\intervalI]$ or the rank of a suffix of $T[\intervalI]$, 
each in \Oh{\lg s} time, where $s = |\intervalI|$.
It can be constructed in $\Oh{n \sqrt{\lg n}}$ time, and takes \Oh{n \lg n} bits of space.
\end{lemma}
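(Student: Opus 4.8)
The plan is to reduce both queries to range-rank and range-selection queries on the inverse suffix array~\ISA{}, answer those with a wavelet tree over~\ISA{} (with an \RmQ{} structure on~\LCP{} on the side), and patch up separately the fact that the suffixes of the string~$T[\intervalI]$ are right-truncated whereas the suffixes of~$T$ are not. Write $\intervalI=[b..e]$ and $s=\abs{\intervalI}$. The suffixes $T[b..],\dots,T[e..]$ of~$T$ beginning inside~\intervalI{} occur in lexicographic order sorted by the \ISA{}-value of their starting position; hence the $k$-th lexicographically smallest of them starts at the position $p\in[b..e]$ for which $\ISA[p]$ is the $k$-th smallest entry of $\ISA[b..e]$, and the lexicographic rank of $T[i..]$ among them equals $\abs{\{p\in[b..e]:\ISA[p]\le\ISA[i]\}}$. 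The former is a range-quantile query, the latter a range-rank query, on the array~$\ISA[1..n]$ over the universe~$[1..n]$; both are answered by a root-to-leaf descent in a wavelet tree on~$\ISA$, which takes \Oh{n\lg n} bits and can be built in \Oh{n\sqrt{\lg n}} time by a packed wavelet-tree construction.

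Moving from the suffixes of~$T$ to the genuine, right-truncated suffixes $T[i..e]$ of~$T[\intervalI]$ reorders a pair $T[i..e],T[j..e]$ with $i<j$ only when $T[j..e]$ is a prefix of $T[i..e]$, i.e.\ when $\lce(i,j)\ge e-j+1$, in which case $T[j..e]\prec T[i..e]$ no matter the global order of $T[i..]$ and $T[j..]$. Evaluating $\lce(i,j)$ in \Oh{1} time with the \RmQ{} on~\LCP{} (\cref{lemRMQ}), one localises, for a fixed~$i$, the positions responsible for such reorderings: those with $j<i$ are exactly the positions whose suffix has the length-$(e-i+1)$ word $T[i..e]$ as a prefix, hence form a single suffix-array range, which yields an \Oh{1} range-count correction to the rank; those with $j>i$ correspond to borders of $T[i..e]$ and thus to the periodic structure of $T[i..e]$, whose period set is a union of \Oh{\lg s} arithmetic progressions, so their contribution can be accounted for within the query budget. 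For the $k$-th-smallest query one first uses the same machinery to find the block of starting positions that truncation has shifted across rank~$k$, and then issues one corrected range-quantile query.

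The step I expect to be the main obstacle is sharpening the query time from the \Oh{\lg n} a textbook wavelet tree gives to the claimed \Oh{\lg s}, so that a short interval costs proportionally less: this forces the wavelet-tree traversal (and its rank/select support) to be governed by the \Oh{s} entries that fall into $[b..e]$ rather than by the alphabet, and it has to coexist with the border-structure correction above without inflating the space beyond \Oh{n\lg n} bits or the construction beyond \Oh{n\sqrt{\lg n}} time. By contrast, the reduction to range queries on~\ISA{} and the description of which pairs the truncation reorders are comparatively routine. With the refinement in hand, the stated bounds follow by assembling the wavelet tree on~\ISA{}, the \RmQ{} structure on~\LCP{}, and (inverse) suffix-array access.
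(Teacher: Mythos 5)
First, note that the paper itself does not prove this lemma: it imports it verbatim as Theorem~3.8 of \cite{babenko15wavelet} (the \emph{wavelet suffix tree}), so you are attempting to reconstruct a substantial external result rather than an argument contained in this article.

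Your reduction of both queries to range-rank/range-quantile on $\ISA[b..e]$, together with the observation that the truncated order of $T[i..e]$ and $T[j..e]$ ($i<j$) differs from the global order only when $\lce(i,j)\ge e-j+1$, is the correct way to see where the difficulty lies; but as a proof it has genuine gaps. First, the correction terms are not actually computed within the budget: for $j<i$ you must count the positions of $[b..i-1]$ that lie in the suffix-array interval of the pattern $T[i..e]$ (split by $\ISA[i]$), which is a weighted-ancestor query plus a two-dimensional range count and costs $\Oh{\lg n}$ with the structures you name, not $\Oh{1}$; for $j>i$ the relevant positions are the borders of $T[i..e]$, of which there can be $\Ot{s}$, and although they form $\Oh{\lg s}$ arithmetic progressions, what you need is, per progression, the number of its elements $j$ with $\ISA[j]$ on a given side of $\ISA[i]$ --- a query for which you provide no data structure and which does not follow from a wavelet tree on $\ISA$. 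Merely invoking the progression structure of borders does not ``account for their contribution within the query budget.'' Second, you explicitly leave unresolved the improvement from $\Oh{\lg n}$ to $\Oh{\lg s}$ per query, which is part of the claimed bound and is exactly what \cref{thmLexparse} consumes. Third, the selection query is not reduced to anything concrete: ``the block of starting positions that truncation has shifted across rank~$k$'' is undefined, and selection cannot in general be derived from a rank oracle without an additional logarithmic factor from binary search. The actual construction in \cite{babenko15wavelet} does not patch a wavelet tree on $\ISA$; it builds a dedicated balanced binary tree over the suffixes whose node bit vectors are chosen so that truncated comparisons are resolved during the root-to-leaf descent, and whose depth relative to the query interval is what yields the $\Oh{\lg s}$ bound. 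Your sketch correctly isolates the obstacles but does not overcome them, so the lemma should remain a citation rather than be accompanied by this argument.
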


Having the data structure of \cref{thmInternalWavelet} and \Oh{1}-time support for LCE queries,
we can compute the substring compression variant of \iLexparse{} for a given query interval $\intervalI$
with the following instructions in \Oh{\VSub \lg s} time, 
where  $s = |\intervalI|$ and
$\VSub$ is the number of factors of the $\iLexparse$ factorization of $T[\intervalI]$.

\myblock{Algorithm}
Start at text position $\ibeg{\intervalI}$, and query the rank of the substring suffix $T[\ibeg{\intervalI}..\iend{\intervalI}]$.
Given this rank is $k$, select the $(k-1)$-st substring suffix of $T[\intervalI]$.
Say this suffix starts at position~$j$, then $j$ is the reference of the first factor.
Next, compute $\ell \gets \lce(\ibeg{\intervalI}, j)$ to determine the factor length~$\ell$.
On the one hand, if the computed factor protrudes the substring $T[\intervalI]$ with $\ibeg{\intervalI} + \ell - 1 > \iend{\intervalI}$, trim $\ell \gets \iend{\intervalI} - \ibeg{\intervalI} + 1$.
On the other hand, if $k = 1$ or $\ell = 0$, then the factor is literal.
Anyway, we have computed the first factor, and recurse on the interval~$[\ibeg{\intervalI}+\ell..\iend{\intervalI}]$ to compute the next factor while keeping the query interval $\intervalI$ fixed for the data structure of \cref{thmInternalWavelet}.
Each recursion step in the algorithm uses a select and a rank query on the data structure of \cref{thmInternalWavelet}, which takes $\Oh{\lg s}$ time each.

\begin{theorem}\label{thmLexparse}
There is a data structure that,
given a query interval $\intervalI$,
can compute lexparse of $T[\intervalI]$ in $\Oh{z \lg s}$ time, where $z$ is the number of lexparse factors and $s = |\intervalI|$ the length of the query interval.
This data structure can be constructed in $\Oh{n \sqrt{\lg n}}$ time, 
using $\Oh{n \lg n}$ bits of working space.
\end{theorem}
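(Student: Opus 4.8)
The plan is to instantiate the algorithm described above and verify its correctness and the stated bounds. First I would build, once, two structures on the whole text: the suffix-rank / suffix-select structure of \cref{thmInternalWavelet}, and the constant-time \lce{} support from the preliminaries, i.e.\ the \RmQ{} of \cref{lemRMQ} on the \LCP{} array together with \ISA{}. For an integer alphabet, \SA{}, \ISA{}, and \LCP{} are computable in $\Oh{n}$ time and $\Oh{n\lg n}$ bits, the \RmQ{} in $\Oh{n}$ time and $\oh{n}$ extra bits, and \cref{thmInternalWavelet} in $\Oh{n\sqrt{\lg n}}$ time and $\Oh{n\lg n}$ bits; the \cref{thmInternalWavelet} step dominates, which gives the claimed $\Oh{n\sqrt{\lg n}}$ construction time within $\Oh{n\lg n}$ bits of working space.

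For correctness, fix the query interval $\intervalI$, write $S := T[\intervalI]$ and $s := |S|$, and argue by induction on the factor index that the algorithm emits exactly the lexparse $F_1\cdots F_z$ of $S$. Suppose $F_1\cdots F_{x-1}$ have been produced, so the current text position is $\src_x := \ibeg{\intervalI} + |F_1\cdots F_{x-1}|$, which corresponds to the local position $d := \src_x - \ibeg{\intervalI} + 1$ in $S$. A rank query of \cref{thmInternalWavelet} yields $k := \ISA_S[d]$, the rank of the suffix $S[d\IC{}] = T[\src_x \IC \iend{\intervalI}]$ among the suffixes of $S$. If $k = 1$, this suffix is lexicographically smallest, so $\PLCP_S[d] = 0$ and $F_x$ is literal, as the algorithm outputs. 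Otherwise a select query returns the $(k{-}1)$-st smallest suffix of $S$, starting at text position $j_x := \ibeg{\intervalI} + \SA_S[k{-}1] - 1$; by definition of \ISA{} and \SA{} this is the start of the immediate lexicographic predecessor of $S[d\IC{}]$ in $S$, i.e.\ the lexparse reference $\Phi_S$ of position $d$. By the defining property of \PLCP{}, $\max\{\operatorname{lcp}(S[d\IC{}], S[p\IC{}]) : \ISA_S[p] < k\} = \PLCP_S[d]$ and this maximum is attained at $p = \SA_S[k{-}1]$, hence $F_x$ has reference $j_x$ and length $\PLCP_S[d]$. It remains to see that the algorithm computes this length, namely that $\PLCP_S[d] = \min\{\lce(\src_x, j_x),\, \iend{\intervalI} - j_x + 1\}$: inside $S$ the common prefix of $S[d\IC{}]$ and $S[\SA_S[k{-}1]\IC{}]$ ends either at a genuine character mismatch, in which case the global \lce{} already equals $\PLCP_S[d]$, or by exhaustion of the shorter of the two suffixes, and since $S[d\IC{}]$ can never be a prefix of its own lexicographic predecessor $S[\SA_S[k{-}1]\IC{}]$, the exhausted suffix must be the source, whose length inside $\intervalI$ is $\iend{\intervalI} - j_x + 1$; one checks that this same cutoff keeps $F_x$ within $\intervalI$ as well. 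The algorithm then recurses on $[\src_x + |F_x| \IC \iend{\intervalI}]$, which closes the induction.

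For the running time, each of the $z$ steps performs one rank query and at most one select query on the structure of \cref{thmInternalWavelet} at cost $\Oh{\lg s}$ each, one constant-time \lce{} query, and $\Oh{1}$ bookkeeping, for a total of $\Oh{z\lg s}$, as claimed. The step I expect to need genuine care is exactly the boundary truncation inside the correctness argument: because \lce{} is evaluated on the global text rather than on $S$, one must pin down which of the two occurrences (the factor itself or its copy source) can overshoot $\iend{\intervalI}$, and therefore which cutoff reconciles the global \lce{} value with the local $\PLCP_S$ value — the lexicographic-predecessor property is what rules out the factor overshooting on its own, leaving $\iend{\intervalI} - j_x + 1$ as the only cutoff that is ever needed. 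Everything else is routine; in particular, that the suffix-rank / suffix-select queries of \cref{thmInternalWavelet} deliver the genuine $\Phi_S$-reference of the current position, rather than some position outside $\intervalI$, is immediate once the \PLCP{}/predecessor correspondence is in place.
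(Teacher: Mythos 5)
Your proposal is correct and follows the same route as the paper: build the interval rank/select structure of \cref{thmInternalWavelet} together with constant-time \lce{} support, then spend one rank, one select, and one \lce{} query per factor to locate the lexicographic predecessor among the suffixes of $T[\intervalI]$ and its matching length, giving $\Oh{\lg s}$ per factor and the stated construction bounds. The one point where you genuinely diverge is the boundary truncation: the paper trims the global \lce{} value to the remaining length of the \emph{destination} (i.e., to $\iend{\intervalI}-\ibeg{\intervalI}+1$ for the current recursion interval), whereas you trim to the \emph{source}'s remaining extent $\iend{\intervalI}-j+1$ and justify, via the fact that a suffix of $T[\intervalI]$ cannot be a prefix of its own lexicographic predecessor, that this cutoff alone recovers the local \PLCP{} value of $T[\intervalI]$ and automatically keeps the factor inside $\intervalI$. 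Your version is the more careful one: with the destination-only cutoff a factor's reference can protrude past $\iend{\intervalI}$ when the global \lce{} runs beyond the interval end from the source position, so your observation that only the source can be the exhausted suffix is precisely the detail needed to make the correctness argument airtight.
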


\begin{algorithm}
  \caption{Longest closed factorization with the bounds stated in \cref{thmClosedFact}.}
  \label{algoLongestClosed}
	\begin{algorithmic}[1]
  \Require query interval $\intervalI$ 
  \If{$\intervalI = \emptyset$}
  \Return
  \EndIf
  \Comment{factorization of empty interval is empty}
  \State $\lambda \gets \fnSelectLeaf(\ibeg{\intervalI})$ 
  \Comment{$\Oh{\timeSA}$ time}
	\label{lineCloseLambdaOne}
  \State 
  \(
    u \gets \argmax_{u \text{~ancestor of~} \lambda} \{ \fnDepth(u) \mid \SA{}.\RMQ[\lmostleaf(u)..\rmostleaf(u)] \neq i \}.
  \)
  \If{$u = $ root}
  \Comment{find $u$ via binary search on $d \mapsto \levelanc(d,\lambda)$}
	\label{lineCloseUroot}
  \State \Output factor $T[\ibeg{\intervalI}]$ 
  \Comment{$\ibeg{\intervalI}$ is rightmost occurrence of character $T[\ibeg{\intervalI}]$}
  \State \Return by recursing on $[\ibeg{\intervalI}+1..\iend{\intervalI}]$
  \EndIf
  \State $\ell \gets \strdepth(u)$
  \Comment{invariant: if $u \neq$ root then $\ell \ge 1$; $j = $ range successor of $\ibeg{\intervalI}$}
  \State $j \gets \SA{}.\rangeNextValue{}(\ibeg{\intervalI}, [\lmostleaf(u)..\rmostleaf(u)])$
  \Comment{$\Oh{\timeSA + \lg n}$ time}
\If{$j+\ell \le \iend{\intervalI}$}
	\label{lineCloseUend}
  \State \Output factor $T[\ibeg{\intervalI}..j+\ell-1]$
  \State \Return by recursing on $[j+\ell..\iend{\intervalI}]$
  \EndIf
  \LComment{the computed factor protrudes $\intervalI \Rightarrow$ cut its right border}
  \State
  \(
    u' \gets \argmin_{u' \text{~ancestor of~} u} \{ \fnDepth(u') \mid \strdepth(u') \ge \iend{\intervalI} - j + 1 \}.
  \)
	\label{lineCloseUcap}
    \State $\ell' \gets \min(\strdepth(u'), \iend{\intervalI}-j+1)$
    \LComment{even if we cut the right border, by definition of $u'$ it cannot appear elsewhere in the factor}
    \State $j' \gets \SA{}.\rangeNextValue{}[\ibeg{\intervalI}, \lmostleaf(u)..\rmostleaf(u)]$
  \Comment{$\Oh{\timeSA + \lg n}$ time}
    \State \Output factor $T[\ibeg{\intervalI}..j'+\ell'-1]$
    \State \Return by recursing on $[j'+\ell'..\iend{\intervalI}]$
  \end{algorithmic}
\end{algorithm}

\subsection{Closed Factorizations}\label{secClosedFact}
For the shortest closed factorization, \citet[Lemma 3]{badkobeh16closed} showed 
that a shortest closed factor of length at least two has a unique border of length one, 
where a \teigi{border} of a string $X$ is a substring of $X$ that is both a prefix and a suffix of $X$.
Given that we have factorized a prefix of $T$ by $F_1 \cdots F_{x-1}$ and our task is to compute $F_x$ starting at $\src_x := 1 + \sum_{y=1}^{x-1} |F_y|$,
we scan for the succeeding occurrence of $T[\src_x]$ in $T[\src_x+1..]$, which determines the end of $F_x$
because then $F_x$ is the shortest substring starting at $\src_x$ having $T[\src_x]$ as a border appearing nowhere else in $F_x$.
By preprocessing $T$ with a wavelet tree, we can \emph{rank} the current occurrence of $T[\src_x]$ and 
\emph{select} the subsequent occurrence of this character.
There are implementations~\cite[Theorem~2]{barbay14wavelet}\cite[Theorem~4]{golynski08redundancy} achieving time $\Oh{1 + \min\left(\frac{\lg \sigma}{\lg \lg n}, \lg \lg \sigma\right)}$ per query if we reduce the alphabet of $T$ such that each character of the alphabet appears in $T$.
Now, given a queried interval $\intervalI \subset [1..n]$, we can factorize $T[\intervalI]$ with $\Ot{z}$ rank/select queries,
where $z$ is the number of computed factors.
It is also possible to store for each character $c \in \Sigma$ a bit vector $B_c[1..n]$ marking all occurrences of $c$ in $T$, and augment $B_c$ with rank/select support data structures that answer each query in constant time,
but totaling up to $n\sigma + \oh{n\sigma}$ bits of space, which is better than a naive solution storing all answers when $\sigma \in \oh{n}$.

For computing the longest closed factorization, we only slightly modify the algorithm presented in \cite[Section 3]{badkobeh16closed}.
The idea there is to use the suffix tree (as we will also later for the LZ78-related factorizations).
In detail, they precompute an array $P[1..n]$ such that $P[i]$ stores the highest ancestor of the leaf $\lambda$ with suffix number $i$ among all ancestors $v$ of $\lambda$ with the property that the largest suffix number of all leaves in the subtree rooted in $v$ is $i$, i.e.,
\[
  v = \argmin_{v \text{~ancestor of~} \lambda} \{ \fnDepth(v) \mid \SA{}.\RMQ[\lmostleaf(v)..\rmostleaf(v)] = i \}.
\]
Like the authors explained, computing $P$ can be done by a preorder traversal of the suffix tree.
Now, given a queried interval $\intervalI \subset [1..n]$, we can factorize $T[\intervalI]$ with $P$ and the suffix tree, cf.~\cref{algoLongestClosed}.
Namely, we take the node $P[\ibeg{\intervalI}]$ corresponding to the leaf $\lambda$ with suffix number $\ibeg{\intervalI}$ (Line~\ref{lineCloseLambdaOne}).
Let $P[\ibeg{\intervalI}]$'s parent be $u$. 
By the property of $P$, $u$ is the \emph{lowest} ancestor of $\lambda$ whose subtree has a leaf with a suffix number $j$ greater than $\ibeg{\intervalI}$.
If $u$ is the root, then $\ibeg{\intervalI}$ is the rightmost occurrence of character $T[\ibeg{\intervalI}]$ in $T$.
Consequently, the longest closed factor starting at text position~$\ibeg{\intervalI}$ is just $T[\ibeg{\intervalI}]$ (Line~\ref{lineCloseUroot}).

Otherwise, the string label $L$ of $u$ is not empty.
$L$ occurs at least as two substrings, two of them starting at text positions $\ibeg{\intervalI}$ and $j$.
The substring $T[\ibeg{\intervalI}..j+|L|-1]$ is therefore bordered with border~$L$, but may not be necessarily closed because $L$ may appear inside somewhere else.
For that property to hold, we need to select the \emph{smallest} such $j$, i.e., 
the range successor of $\ibeg{\intervalI}$ among the suffix numbers in the subtree of $u$.
Finally, we need to check whether $j+|L| \le \iend{\intervalI}$ (Line~\ref{lineCloseUend}).
If this does not hold, we need to shorten the border~$L$ that protrudes the query interval $\intervalI$ (even the case that $j > \iend{\intervalI}$ may hold and needs to be considered).
However, by shortening the border, the shortened border~$L'$ may now appear inside the computed factor elsewhere.
As a remedy, we need to update $j$ to the leftmost occurrence of $L'$ after $\ibeg{\intervalI}$.
To this end, we visit the shallowest ancestor~$u'$ with a string depth of at least $j-\iend{\intervalI}$, which can $u$ be itself (Line~\ref{lineCloseUcap}).
If $u'$ is now the root, we are at the first case above.
Otherwise, with $u'$ we redetermine $j$, which can be now further to the left, but we are guaranteed that the shortened border does not appear elsewhere.

\block{Time}
For efficient computation, we need a data structure for finding the range successor~$j$.
Remembering that reading the suffix numbers of the suffix tree leaves in leaf-rank order gives the suffix array,
we can build a wavelet tree on the suffix array to compute~$j$.
The wavelet tree can answer find $j$ in $\Oh{\lg n}$ time by a query coined as \rangeNextValue{}~\cite[Theorem~7]{gagie12new}.
\begin{itemize}
  \item $\SA{}.\rangeNextValue{}(x, \intervalJ)$ returns the successor of $x$ in $\SA[\intervalJ]$, i.e.,
\end{itemize}
   \[ 
    \min \{ \SA[x] ~\mid~ x \in \intervalJ \wedge \SA[x] > v \}
  \]

Next, to determine $u'$, we want a data structure that finds ancestors based on the string depth, and not just on depth like $\strdepth$.
To this end, we make use of the following data structure.

\begin{lemma}[\cite{gawrychowski14weighted,belazzougui21weighted}] \label{lemWeightedAncestor}
  There exists a weighted ancestor data structure for \ST{}, 
  which supports, given a leaf~$\ell$ and integer~$d$, 
  constant-time access to the ancestor of $\ell$ with string depth~$d$.
  It can be constructed in linear time using $\Oh{n \lg n}$ bits of space.
\end{lemma}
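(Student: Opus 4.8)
The plan is to reduce a single weighted ancestor query to two threshold queries on the \LCP{} array plus one lowest-common-ancestor (LCA) query, and then to quote the cited works for the constant-time resolution of the threshold queries. Fix the leaf $\ell$ and let $p$ be its leaf-rank, so that $\ell$ sits at position~$p$ when the leaves of \ST{} are read in leaf-rank order (equivalently, in the order of \SA{}). For a target string depth $d \ge 1$, set $l := \max\{q \le p : \LCP[q] < d\}$ (well defined since $\LCP[1]=0$) and $r := \min\{q > p : \LCP[q] < d\} - 1$, taking $r := n$ when no such $q$ exists. I would claim that the node reached at string depth~$d$ on the root-to-$\ell$ path — i.e., the shallowest ancestor $u$ of $\ell$ with $\strdepth(u) \ge d$ — is precisely the node with $\lmostleaf(u)=l$ and $\rmostleaf(u)=r$, which is the LCA of the leaves of leaf-rank $l$ and $r$. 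For correctness one uses that the ancestors of $\ell$ have nested \SA{}-ranges and strictly increasing string depths: every \LCP{} value strictly inside $[l..r]$ is $\ge d$, so the node with this range has string depth $\ge d$, while any strictly larger \SA{}-range contains position~$l$ or position~$r+1$ in its interior and hence has string depth~$< d$. Consequently, once \ST{} is augmented with a linear-space, constant-time LCA structure, the only remaining task is to answer in $\Oh{1}$ time, given a position $p$ and a threshold $d$, the ``previous'' and the ``next'' position whose \LCP{} value drops below~$d$.

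These two threshold queries are the crux, and are where I would lean on~\cite{gawrychowski14weighted,belazzougui21weighted}. A natural first attempt is a heavy-path decomposition of \ST{}: along any heavy path the string depths increase from top to bottom, so a threshold query promised to land \emph{within} one fixed heavy path is a plain predecessor search in a sorted sequence, and the pertinent heavy path and entry point can be read off from \SA{}-ranges. But the root-to-$\ell$ path meets up to $\Oh{\lg n}$ heavy paths, so handling each with its own predecessor step costs $\Oh{\lg n}$ per query. Collapsing this to $\Oh{1}$ is the technical heart of the cited constructions: one batches the per-heavy-path searches by embedding the string-depth sequences of all heavy paths into a single global structure, reducing the whole query either to a constant number of deterministic (fusion-tree-style) predecessor lookups over a universe of size $\Oh{n}$, or to a planar point-location / slab-stabbing instance known to admit $\Oh{1}$ query time in $\Oh{n}$ words; \cite{belazzougui21weighted} additionally streamlines the build to genuinely linear time. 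I would invoke this as a black box.

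As for resources: \ST{}, \SA{}, and \LCP{} are computed in $\Oh{n}$ time and occupy $\Oh{n\lg n}$ bits; the heavy-path decomposition, the constant-time LCA structure, and the auxiliary threshold structure described above are each built in $\Oh{n}$ time within $\Oh{n\lg n}$ bits, giving the claimed bounds. The main obstacle — the part I would not try to reprove — is exactly this $\Oh{1}$ amortization across the $\Oh{\lg n}$ heavy paths: for \emph{general} weighted trees any $\Oh{n}$-word structure provably needs $\Om{\lg\lg n}$ time per query, so the argument must genuinely use that the string depths along heavy paths of a suffix tree form a well-structured subset of $[1..n]$, and cannot be a reduction to an arbitrary weighted tree.
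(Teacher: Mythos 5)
The paper states this lemma purely as a black-box citation of \cite{gawrychowski14weighted,belazzougui21weighted} and gives no proof of its own, and your sketch ultimately does the same: it correctly identifies the constant-time ``previous/next \LCP{}-drop below threshold'' predecessor step as the genuinely hard part and defers exactly that to the cited works. Your added reduction (SA-interval endpoints $l,r$ via the two threshold queries, then the LCA of the leaves with leaf-ranks $l$ and $r$, justified by the nestedness of ancestor ranges) is correct and is a reasonable restatement of the suffix-tree weighted-ancestor problem, so the proposal is consistent with the paper.
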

In total, we can compute the factorization of $T[\intervalI]$ in $\Oh{z \lg n}$ time, where $z$ is the number of computed factors.

\begin{example}
For our running example $T = \mathtt{\runningExample}$ with $\intervalI = [1..n]$, $P[1]$ is the leaf with suffix number $1$ itself. 
Its parent has preorder number~$4$ with string length~7; the succeeding suffix starts at position $6$. 
Hence, the first factor is $T[1..6+7-1] = \texttt{ababbababbab}$.
\end{example}

\block{Reducing Space}
If we allow us more time, we can spend some time to compute the $P$ values on the fly.
Instead of storing $P$, we perform a binary search with $d \mapsto \levelanc(\ell,d)$ to find the node $P[\ibeg{\intervalI}] = v$.
The binary search takes the maximum suffix number in the ancestor node's subtree as search key, i.e., $\SA{}.\RMQ[\lmostleaf(v)..\rmostleaf(v)]$.
These values form a non-decreasing monotonic sequence for increasing values of $d$, and therefore a binary search is feasible.
Finding $v$ therefore takes $\Oh{\timeSA \lg n}$ time. 
Similarly, for finding $u'$ we drop the weighted ancestor data structure of \cref{lemWeightedAncestor}, but perform the aforementioned binary search with the evaluation of $\strdepth$ as key values.
The search for $u'$ takes similarly $\Oh{\timeSA \lg n}$ time.
Both binary searches become now the time bottleneck per factor since we additionally have
$\Oh{\lg n}$ time for \rangeNextValue{} and $\Oh{\timeSA}$ time for $\strdepth(u)$ to determine $L$.
The final algorithm is given in \cref{algoLongestClosed}.
We obtain the following result:

\begin{theorem}\label{thmClosedFact}
There is a data structure that,
given a query interval $\intervalI$,
can compute the longest or shortest closed factorization of $T[\intervalI]$ in (a) $\Oh{z \lg n}$ time or (b) $\Oh{z \lg^{1+\epsilon} n}$ time, 
where $z$ is the number of lexparse factors.
This data structure can be constructed in $\Oh{n}$ time, 
using (a) $\Oh{n \lg n}$ bits or (b) $\Oh{n \lg \sigma}$ bits of working space.
\end{theorem}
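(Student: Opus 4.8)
The plan is to establish the two variants (a) and (b) essentially simultaneously, treating the longest and shortest closed factorizations as two instances of the same blueprint: each factor $F_x$ is found by a constant number of suffix-tree navigations plus one range-successor query, and the differences between the two factorizations, and between the time/space trade-offs, live only in which auxiliary data structures realize those navigations. First I would fix the preprocessing budget. For variant (a), we build the suffix tree in $\Oh{n \lg n}$ bits in linear time, add the constant-time operations $\fnDepth$, $\lmostleaf$, $\rmostleaf$, $\levelanc$, and $\strdepth$ from the preliminaries, the weighted-ancestor structure of \cref{lemWeightedAncestor} (linear construction, $\Oh{n \lg n}$ bits), a wavelet tree on $\SA$ supporting $\rangeNextValue$ in $\Oh{\lg n}$ time~\cite{gagie12new}, and — for the longest variant — the array $P$, computed by a single preorder traversal; for the shortest variant we instead keep the per-character rank/select support on $T$ described in \cref{secClosedFact}. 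All of this is $\Oh{n}$ construction time and $\Oh{n \lg n}$ bits. For variant (b), I would swap the $\Oh{n \lg n}$-bit suffix tree for the $\Oh{n \lg \sigma}$-bit compressed representation of~\cite{munro17cst,fischer18lz}, drop the explicit $\LCP$ and $\SA$ (so $\strdepth$ and $\fnSelectLeaf$ now cost $\timeSA = \Oh{\log_\sigma n}$), drop the array $P$, and drop the weighted-ancestor structure; the navigations that previously were constant-time are replaced by binary searches over $d \mapsto \levelanc(\lambda,d)$ as described in the ``Reducing Space'' paragraph.

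The correctness argument I would give is the invariant already sketched in the text, made precise. Suppose $F_1 \cdots F_{x-1}$ has been emitted and we must compute $F_x$ starting at $b := \ibeg{\intervalI}$ of the current recursion interval. Let $\lambda = \fnSelectLeaf(b)$ and let $u$ be the lowest ancestor of $\lambda$ whose subtree contains a leaf with suffix number $> b$ (obtained as the parent of $P[b]$ in variant (a), or by the monotone binary search on $\SA.\RMQ[\lmostleaf(\cdot)..\rmostleaf(\cdot)]$ in variant (b)). If $u$ is the root then $T[b]$ occurs nowhere to its right, so the only closed factor (longest \emph{or} shortest) starting at $b$ is the literal $T[b]$; recurse on $[b+1..\iend{\intervalI}]$. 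Otherwise the string label $L$ of $u$ is nonempty, $|L| = \strdepth(u)$, and choosing $j := \SA.\rangeNextValue(b, [\lmostleaf(u)..\rmostleaf(u)])$ gives the \emph{leftmost} occurrence of $L$ strictly after $b$; by the defining property of $u$ (it is the lowest such ancestor), $L$ has no occurrence strictly between $b$ and $b + |L|$, hence $T[b..j+|L|-1]$ is closed with border $L$, and it is the longest closed factor at $b$ — this is exactly the argument of~\cite[Section 3]{badkobeh16closed}, and for the shortest variant the symmetric fact~\cite[Lemma~3]{badkobeh16closed} that a shortest closed factor of length $\ge 2$ has a unique length-one border reduces the step to a single rank-then-select on $T[b]$. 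The remaining subtlety, handled in Lines~\ref{lineCloseUcap}ff.\ of \cref{algoLongestClosed}, is that $T[b..j+|L|-1]$ may protrude $\intervalI$; then we must shorten the border to length $\iend{\intervalI} - j + 1$, but a shorter border $L'$ could now recur inside the factor, so we climb to the shallowest ancestor $u'$ of $u$ with $\strdepth(u') \ge \iend{\intervalI} - j + 1$ (via \cref{lemWeightedAncestor} in (a), or the $\strdepth$-keyed binary search in (b)), re-read $j$ for $u'$, and emit $T[b..j+\ell'-1]$ with $\ell' = \min(\strdepth(u'), \iend{\intervalI}-j+1)$; I would verify that $u'$ is still a proper (nonempty-label) ancestor or else falls back to the literal case, and that the monotonicity of $\strdepth$ along a root-to-leaf path makes the climb well defined.

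The running-time accounting is then routine: each factor costs $\Oh{1}$ suffix-tree primitives plus one $\rangeNextValue$ at $\Oh{\lg n}$ plus one weighted-ancestor query. In variant (a) every primitive is constant time and $\timeSA = \Oh{1}$, so a factor costs $\Oh{\lg n}$ and the total is $\Oh{z \lg n}$. In variant (b) the dominating terms per factor are the two binary searches, each performing $\Oh{\lg n}$ steps of $\levelanc$ or $\strdepth$ at $\Oh{\timeSA} = \Oh{\log_\sigma n} = \Oh{\lg^{\epsilon} n}$ amortized cost (choosing the compressed-suffix-array sampling so that $\log_\sigma n \in \Oh{\lg^{\epsilon} n}$), giving $\Oh{\lg^{1+\epsilon} n}$ per factor and $\Oh{z \lg^{1+\epsilon} n}$ overall, while the $\rangeNextValue$ term $\Oh{\lg n}$ and the $\strdepth(u)$ term are absorbed. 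I expect the main obstacle to be not the asymptotics but pinning down the protrusion case rigorously — in particular proving that after re-reading $j$ at the shallower node $u'$ the emitted substring is genuinely closed (its border occurs exactly at the two ends and nowhere strictly inside), including the awkward regime $j > \iend{\intervalI}$ flagged in the text, and checking that this never leaves us with a degenerate factor of length $0$ or $1$ that would break progress of the recursion; this needs a careful case split on whether $u'$ coincides with $u$, with a proper strict ancestor of $u$, or with the root.
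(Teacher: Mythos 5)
Your proposal follows the paper's own argument essentially step for step: the same characterization of $u$ as the lowest ancestor of $\fnSelectLeaf(\ibeg{\intervalI})$ whose subtree contains a larger suffix number (via the array $P$, or via the monotone binary search on $d \mapsto \levelanc(\lambda,d)$ for the small-space variant), the same $\rangeNextValue$ wavelet-tree query for the leftmost next occurrence of the border, the same weighted-ancestor (resp.\ $\strdepth$-keyed binary search) treatment of the protruding-factor case, and the same per-factor accounting yielding $\Oh{z\lg n}$ and $\Oh{z\lg^{1+\epsilon} n}$. The only deviations are cosmetic (e.g., phrasing the choice of CSA sampling so that $\timeSA\in\Oh{\lg^{\epsilon} n}$), so the proposal matches the paper's proof.
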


In what follows, we apply the suffix tree traversal techniques to other factorizations.
Two of them belong to the family of semi-greedy parsings, which we introduce next.

\section{Semi-Greedy Parsing}\label{secSemiGreedy}
The semi-greedy parsing~\cite{hartman85optimal} is a variation of the LZ77 parsing in that 
a factor $F_x$ of the factorization $F_1 \cdots F_z$ starting at text position~$\dst_x := |F_1 \cdots F_{x-1}|$ does not necessarily have to be the longest prefix of $T[\dst_x..]$ that has an occurrence starting before $\dst_x$.
Given that the longest prefix has length $\ell$, 
the semi-greedy parsing instead selects the length $\ell' \in [1..\ell]$ that maximizes the sum of the lengths of the current and the next factor $|F_x| + |F_{x+1}|$ like $F_x$ would have length $\ell'$ and $F_{x+1}$ would have been selected by the standard greedy parsing continuing at $T[\dst_x+\ell'..]$.

\myblock{Semi-Greedy LZ77}
\citet[Algorithm~10]{crochemore12lpf} proposed an algorithm working with the array \LPF{} to compute the semi-greedy parsing in \Oh{n} time. 
$\LPF$, called longest previous factor table~\cite{crochemore08lpf},
is an array of length $n$ with
$\LPF[j] = \max \{\ell \mid \text{there exists an~} i \in [1..j-1] \text{~such that~} T[i..i+\ell-1] = T[j..j+\ell-1] \}$ for every $j \in [1..n]$.
The algorithm of \citet{crochemore12lpf} can be modified such that a substring query for $\intervalI$ with $1 \in \intervalI$ can be answered in $\Oh{z}$ time if $\Oh{n}$ preprocessing is allowed.
For that, build an RMQ data structure on $f : i \mapsto \LPF[i] + i$ for all $i \in [1..n]$
such that the query for the range $R := [i..i+\LPF[i]]$ leads to the position $i+\ell'-1 \in R$ for which $f(i)$ is maximal.
Thus, we can spend constant time per factor; we stop when the last computed factor~$F_z$ ends or goes beyond the end of \intervalI{}, where we trim $F_z$ in the latter case.
For supporting intervals that do not start with the first text position, we need to recompute $\LPF$ whose entries depend on previous substring occurrences.

\myblock{Semi-Greedy LZW}
\citet{horspool95effect} proposed adaptations of the semi-greedy parsing for Lempel--Ziv--Welch (LZW)\J{~\cite{welch84lzw}}, a variation of the LZ78 factorization.
These LZ78-based semi-greedy parsings have been studied by \citet{matias99optimality}, who further generalized the semi-greedy parsing to other parsings, and coined the name \teigi{flexible parsing} for this parsing strategy.
They also showed that the flexible parsing variant of parsings using prefix-closed dictionaries is optimal with respect to the minimal number of factors.
A set of strings $\mathcal{S} \subset \Sigma^+$ is called \teigi{prefix-closed} if every non-empty prefix of every element of $\mathcal{S}$ is itself an element of $\mathcal{S}$.
A follow-up~\cite{matias01effect} presents practical compression improvements as well as algorithmic aspects in how to compute the flexible parsing of LZW, called \iLZWFP{}, or the variant of \citet{horspool95effect}, which they call \iFPA{}.
The main difference between \iFPA{} and \iLZWFP{} is that \iLZWFP{} uses the parsing dictionary of the standard LZW factorization, 
while the factors in \iFPA{} refer to substrings that would have been computing greedily at the starting positions of the \iFPA{} factors.
The selection of reference makes the computation of \iLZWFP{} (theoretically) easier since we can statically build the LZW parsing dictionary with a linear-time LZW construction algorithm like~\cite{fischer18lz}.
It also lets the factors of \iLZWFP{} be upper bounded by the factors of LZW because we do not change the original LZW dictionary while making a wiser choice for the factors.
On the other hand, \iFPA{} may create more factors than LZW, which has however not yet been observed in practice.
The major conceptional difference to LZW is that a \teigi{reference} is no longer a factor of the computed factorization itself.
Instead, a reference is a substring of the prefix-free dictionary and has a \teigi{reference index} mimicking the definition \emph{factor index}, which together with \emph{referred factor} is obsolete in this section in favor for the terms \emph{reference} and \emph{reference index}.
References are assigned their indices based on the time of creation. 
Because the dictionary is prefix-free, it exhibits the same characteristics as the LZ trie, which now represents the references instead of the factors.
For \iLZWFP{}, the LZ trie is actually the LZ trie of LZW.
In what follows, we adapt both flexible parsing variants to LZ78 as a means of didactic reduction (the adaptation from LZ78 to LZW is straightforward).
We first formally define both factorizations in the LZ78 terminology and subsequently give algorithms for the factorizations.

\myblock{\iLZEightFP{}: LZ78 version of \iLZWFP{}}
Briefly speaking, we apply the flexible parsing to the LZ78 factorization without changing the LZ78 dictionary at any time. Thus, for computing a factor $F_x$, we restrict us to refer to any LZ78 factor that ends prior to the start of $F_x$.
The rationale of this factorization is that the decompression works by building the LZ78 dictionary while decoding the factors of the flexible parsing.
For a formal definition, assume that the LZ78 factorization of $T$ is $T = R_1 \cdots R_{z_{78}}$.
Given $R_0$ denotes the empty string,
we write $\iend{R_0} = 0$ and $\iend{R_x} := \sum_{y=1}^{y=x} |R_y|$ for the ending position of $R_x$ in $T$.
Now suppose we have parsed a prefix $T[1..i-1] = F_1 \cdots F_{x-1}$ with \iLZEightFP{} and want to compute $F_x$.
At that time point, the dictionary $\DICT$ consists of all LZ78 factors that end before $\dst_x$, appended by any character in $\Sigma$. 
Instead of greedily selecting the longest match in $\DICT$ for $F_x$ as LZ78 would do,
we select the length $|F_x|$ for which $F_x \cdot F_{x+1}$ is longest (ties are broken in favor of a longer $F_x$). 
Note that at the time for selecting $F_{x+1}$, we can choose among all LZ78 factors that end before $\dst_{x+1}$, which depends on $|F_x|$.
Formally, let $y$ be the index of the longest LZ78 factor~$R_y$ that is a prefix of the starting position $T[\dst_x..]$ of $F_x$, i.e., 
\[
  y = \argmax_{y' \in [1..z_{78}]}  \{ \abs{R_{y'}} :  R_{y'} = T[\dst_x..\dst_x+|R_{y'}|-1] \wedge \iend{R_{y'}} < \dst_x \}.
\]
Then we determine in the interval $\intervalJ := [\dst_x+1..\dst_x+|R_y|]$ the starting position of $F_{x+1}$
by the flexible parsing, i.e., we select the position $p \in \intervalJ$ for which the value
\[
  \argmax_{p \in \intervalJ} \{ p + \abs{R_{y'}} : R_{y'} = T[p..p+|R_{y'}|-1] \wedge \iend{R_{y'}} < p \}
\]
is maximized.
Then $F_x$ has length $p-\dst_x$; if $|F_x| \ge 2$, then its reference exists since LZ78 is prefix-closed.
By doing so, we obtain the factorization of \iLZEightFP{}.

\myblock{\iLZEightFPA{}: LZ78 version of \iFPA{}}
The other semi-greedy variant we study in its LZ78 version is the one of \iFPA{}, which we call \iLZEightFPA{}.
The only difference to \iLZEightFP{} is that the dictionary \DICT{} is based on the LZ78 factors $R'_1, \ldots, R'_{z'_{78}}$ starting at the \iLZEightFPA{} factor starting positions $\dst_x$ for all $x$.
The definition of $\mathcal{J}$ and $p$ is identical to above be switching $R_y$ with $R'_y$.
Similarly, our task is to select the length~$|F_x| \in [1..|R'_y|+1]$ such that we advance the furthest in the text with a factor starting at $\dst_x+|F_x|$, for all such possible lengths $|F_x|$.

\begin{figure}[htpb]
  \centering
  \includegraphics[width=0.3\textwidth,page=1]{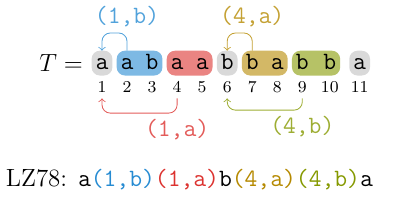}
  \includegraphics[width=0.3\textwidth,page=2]{./factorization/flex}
  \includegraphics[width=0.3\textwidth,page=3]{./factorization/flex}
  \caption{Visualization of the factorizations in \cref{exFP}.}
  \label{figExFlex}
\end{figure}

\begin{example}\label{exFP}
  Let $T = \texttt{aabaabbabba}$. The LZ78 factorization of $T$ is
  $
  R_1 = \texttt{a},
  R_2 = \texttt{ab} = R_1 \texttt{b},
    R_3 = \texttt{aa} = R_1 \texttt{a},
    R_4 = \texttt{b},
    R_5 = \texttt{ba} = R_4 \texttt{a},
    R_6 = \texttt{bb} = R_4 \texttt{b},
    R_7 = \texttt{a}$.
    \iLZEightFP{} builds upon $R_1 \cdots R_7$ to produce the factorization
    $
    F_1 = \texttt{a},
    F_2 = \texttt{ab} = R_1  \texttt{b},
    F_3 = \texttt{a},
    F_4 = \texttt{abb} = R_2 \texttt{b},
    F_5 = \texttt{abb} = R_2 b,
    F_6 = \texttt{a}.
    $
    Noteworthy is the choice of $F_3$ being shorter than $R_3$ in the favor for creating a longer factor $F_4$ contrary to the 4-th factor $R_4$ of LZ78.

    Finally, the \iLZEightFPA{} factorization is 
    $
    F'_1 = \texttt{a},
    F'_2 = \texttt{ab} = R'_1 \texttt{b},
    F'_3 = \texttt{a},
    F'_4 = \texttt{abb} = R'_2 \texttt{b},
    F'_5 = \texttt{abba} = R'_4 \texttt{a},
    $
    where $R'_1 = \texttt{a}$ starts at $T[1..]$, $R'_2 = \texttt{ab}$ at $T[2..]$, 
    $R'_3 = \texttt{aa}$ at $T[4..]$,
    $R'_4 = \texttt{abb}$ at $T[5..]$.
    Noteworthy is the overlapping of $R'_3$ and $R'_4$, 
    which is because $R'_j$ starts with $F'_j$ for every $j$ by definition.
    This choice makes it possible to let $F'_5$ refer to the preferably long reference $R'_4$.
    We thus could shorten the factorization size from 7 (LZ78) to 6 (\iLZEightFP{}) and 5 (\iLZEightFPA{}) with the flexible parsing variants.
    A visualization of all three factorizations is given in \cref{figExFlex}.
\end{example}

The difference on the number of factors can be larger than the experienced constant in \cref{exFP}.
In factor, the following example gives a difference in $\Ot{n/\lg n}$.

\begin{example}
  Select an LZ78-incompressible string $S$ of length $n$ on the alphabet $\{\texttt{b}, \texttt{c}\}$ such that the number of LZ78 factors of $S$ is $\Ot{n/\lg n}$.
Suppose we parse the string $T = \texttt{a} \cdot \texttt{a} S[1] \cdot \texttt{a} S[1..2] \cdot \texttt{a} S[1..3] \cdots \texttt{a} S[1..n] \cdot \texttt{a} S[1..n] \cdot \texttt{a} \cdot S[1..n]$.
On the one hand, the LZ78 factorization would create the factors $R_1 = \texttt{a}$, $R_2 = \texttt{a} S[1]$, $R_3 = \texttt{a} S[1..2]$, $\ldots$, $R_{n+1} = \texttt{a} S[1..n]$,
$R_{n+2} = \texttt{a} S[1..n] \texttt{a}$, and finally create the $\Ot{n/\lg n}$ factors for factorizing $S$.
On the other hand, both flexible parsing variants have $F_x = R_x$ for all $x \in [1..n+1]$, but select $F_{n+2} = R_{n+1} = F_{n} S[n]$ because doing so 
maximizes the length of $F_{n+3} = \texttt{a} S[1..n] = R_{n+1} = F_n S[n]$.
While the flexible parses have thus $n+3$ factors, the LZ78 factorization has $n + 2 + \Ot{n / \lg n}$ many.
\end{example}

\myblock{Decompression}
By storing the \iLZEightFP{} factors of $T$ as a list of pairs like in LZ78 allows for reconstructing~$T$.
While literal factors can be restored by reading the stored characters, 
a referencing factor refers to a reference that is an LZ78 factor.
We therefore need to compute the LZ78 factorization on the decoded output on-the-fly.
For \iLZEightFPA{}, for each referencing factor we read we need to restore an LZ78 factor starting at the same position, which becomes a reference.
This is not immediately possible if the read \iLZEightFPA{} factor is shorter, so we trigger this LZ78 factor computation when enough text has been decompressed.
This is also the reason why, at the computation of an \iLZEightFP{} or \iLZEightFPA{} factor~$F_x$, 
we can only choose references representing factors that \emph{end before} the starting position $\dst_x$ of $F_x$.

\subsection{Computation with AC automaton}\label{secACautomaton}
For computation of any of the two LZ78 flexible parsings, let us stipulate that we can compute the longest possible factor starting in $\intervalJ$ for each text position by a lookup in \DICT{} within $\timeDICT$ time.
\citet[Section~3]{matias01effect} achieved $\timeDICT = \Oh{1}$ expected time per text position by storing 
Karp--Rabin fingerprints\J{~\cite{karp87efficient}} in a hash table to match considered substrings with the parsing dictionary that represents its elements via fingerprints.
Alternatively, they get constant time for constant-sized alphabets with two tries, where one trie stores the dictionary, and the other the reversed strings of the dictionary.
Despite claimed~\cite{matias99optimality} that \iLZWFP{}, the flexible parsing of LZW, can be computed in linear time, no algorithmic details have been given.

However, it is also possible to set $\timeDICT$ to the time for a node-to-child traversal time in a trie implementation.
For \iLZEightFP{}, we build an Aho--Corasick (AC) automaton~\cite{aho75efficient}
on \DICT{} just after we have computed the LZ78 factorization, 
so \DICT{} stores all (classic) LZ78 factors~$R_1, \ldots, R_{z_{78}}$.
Roughly speaking, this AC automaton is the LZ trie augmented with suffix links.
Like the LZ trie, the AC-automaton allows us to search for the longest matching prefix of a suffix $T[p..]$ in $\Oh{\ell \timeDICT}$ time if $\ell$ is the length of this prefix,
where $\timeDICT$ is the time for a node-to-child traversal.
To this end, we traverse the LZ trie downwards, matching characters of $T[p..]$ with edge labels.
Given we read $\ell$ characters and end up at a node $v$ such that none of its out-going edges matches with $T[p+\ell]$,
then $T[p..p+\ell-1]$ is the longest LZ78 factor that is a prefix of $T[p..]$.
Given this factor is $R_w$,
we say that $v$ is the \teigi{locus} of $R_w$, or more generally: the locus of an LZ78 factor $F$ is the LZ trie node whose string label is $F$.

With suffix links, the AC automaton allows us to select, from the locus of $R_w$ in the trie, 
the longest proper suffix $T[p'..p+\ell-1]$ with $p' > p$ of the matched LZ78 factor~$R_w$ that is again an LZ78 factor~$R_u$.
Since \DICT{} is prefix-closed, for all other LZ78 factors that would match with prefixes $P$ of $T[k..]$ with $k \in [p+1..p'-1]$, $P$ is shorter than $R_u$, and can be omitted for the computation of $F_x$.
Nevertheless, we need to traverse downwards from the locus of $R_u$ to find the longest possible LZ78 factor that is a prefix of $T[p'..]$.
The explained algorithm uses the following two queries on the AC automaton:
\begin{itemize}
  \item $\fnSuffixLink(w)$: returns the node whose string label is longest among all proper suffixes of $w$ that are string labels of nodes of the LZ trie.
  \item $\fnChild(v,c)$: returns the child of the LZ trie node $v$ connected by an edge with label $c \in \Sigma$.
\end{itemize}
Our algorithm computes the first search window $T[p..p+\ell-1]$ by $\ell+1$ $\fnChild{}$ queries, but then subsequently 
uses $\fnSuffixLink$ to shrink the window from the left-hand side or extend the window on the right-hand side by $\fnChild$.
The total number of $\fnSuffixLink$ and $\fnChild$ queries is thus bounded by the number of characters we want to factorize.

To correctly compute $F_x$, we need to omit nodes that correspond to factors $R_y$ with $\iend{R_y} \ge \dst_x$. 
To this end, we augment the node corresponding to $R_y$ with $\iend{R_y}$, for each $y$.
When traversing the automaton for pattern matching to find the longest match of $T[p..]$ in $\DICT$, we stop the traversal as soon as we try to visit a node with factor index $y$ and $\iend{R_y} \ge \dst_x$.
That is because the factor index of a child is larger than of its parent, and thus the augmented values (i.e., the LZ78 factor indices) strictly grow when we descend in the AC automaton.
Further, while matching $T[\dst_x+\ell..]$, we may take a suffix link to a node $v$ with factor index $y$ and $\iend{R_y} \ge \dst_x$; in this case we skip $v$ and continue with taking the suffix link of $v$ (which involves increasing $\ell$).

Finally, to perform a query in the AC automaton only once per text position, we temporarily memoize the computed references for the range $[\dst_{x+1}..\dst_x+|R_w|] \subset \intervalJ$ such that we can skip the recomputation of the longest match when searching for the length of $F_{x+1}$. (Recall that $R_w$ is the longest LZ78 factor starting at $\dst_x$ with $\iend{R_w} < \dst_x$.)
With the memoization we compute the same factorization since creating factors does \emph{not} change the parsing dictionary.
The memoized values are discarded after having factorized the respective text positions.
The number of memoized positions is upper-bounded by the longest LZ78 factor length, which we can bound by $\Oh{\sqrt{n}}$.

\begin{theorem}
  We can compute \iLZEightFP{} in 
  \Oh{n \timeDICT} time with $\Oh{(z+\delta) \lg n}$ bits of space,
  where $\delta = \Oh{\sqrt{n}}$ is the number of memoized values.
  If we answer \fnChild{} queries with balanced binary search trees, then
  $\timeDICT = \Oh{\lg \sigma}$.
\end{theorem}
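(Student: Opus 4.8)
The plan is to show that the algorithm described above is correct and then account for the time and space. First I would argue correctness: by induction on $x$, assume $F_1 \cdots F_{x-1}$ has been parsed exactly as the $\iLZEightFP{}$ definition prescribes. The definition asks us to find, among all lengths $|F_x| \in [1..|R_w|+1]$ (where $R_w$ is the longest LZ78 factor that is a prefix of $T[\dst_x..]$ and satisfies $\iend{R_w} < \dst_x$), the one maximizing $|F_x|+|F_{x+1}|$ under the greedy continuation. Equivalently, over $p \in \intervalJ = [\dst_x+1..\dst_x+|R_w|]$ we maximize $p + \abs{R_{y'}}$ where $R_{y'}$ is the longest LZ78 factor starting at $p$ with $\iend{R_{y'}} < p$. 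I would verify that the AC-automaton traversal, stopped as soon as a node with augmented value $\iend{R_y} \ge \dst_x$ would be visited (and with suffix-link hops skipping such nodes), returns exactly this longest admissible $R_{y'}$ for each $p$; the key observation, already noted in the text, is that the augmented LZ78-indices strictly increase along any root-to-node path, so the stopping rule is sound and loses no admissible candidate. The memoization step stores the references computed for positions in $[\dst_{x+1}..\dst_x+|R_w|]$; since emitting factors never alters $\DICT$, the reference recorded at position $p$ during the window for $F_x$ is still the correct one when $p$ becomes the start position of a later factor, so reusing it yields the same factorization.

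Next I would bound the time. Computing the LZ78 factorization and building the AC automaton on $R_1,\dots,R_{z_{78}}$ (the LZ trie augmented with suffix links and the $\iend{}$-annotations) takes $\Oh{n}$ time via a linear-time LZ78 construction such as~\cite{fischer18lz}, plus $\Oh{z \timeDICT}$ for inserting the $\fnChild$ pointers if those are realized by balanced BSTs. For the parsing phase, the crucial amortization is the standard AC-automaton argument: the first search window at $\dst_1$ costs $\ell+1$ $\fnChild$ queries, and thereafter every step either advances the right end of the current window by one $\fnChild$ query or advances the left end via one $\fnSuffixLink$; since the left end never moves past the right end and both endpoints only move forward through the $n$ text positions, the total number of $\fnChild$ and $\fnSuffixLink$ operations over the whole run is $\Oh{n}$. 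Each such operation costs $\Oh{\timeDICT}$ — and $\timeDICT = \Oh{\lg \sigma}$ if $\fnChild$ is answered by a balanced binary search tree over a node's children, while $\fnSuffixLink$ is $\Oh{1}$ by a stored pointer — giving $\Oh{n \timeDICT}$ overall.

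For the space, the AC automaton has one node per LZ78 factor, i.e.\ $\Oh{z}$ nodes, each storing a constant number of pointers and one $\Oh{\lg n}$-bit annotation, for $\Oh{z \lg n}$ bits; I would add that the memoization buffer holds at most one entry per position in the current window, and the window length is at most the maximal LZ78 factor length, which is $\Oh{\sqrt{n}}$ (the classic bound, since the $z_{78}$ distinct factors have total length $n$ and the trie path realizing the longest factor forces $z_{78} = \Om{\ell^2}$ when $\ell$ is that length). Hence $\delta = \Oh{\sqrt{n}}$ memoized values of $\Oh{\lg n}$ bits each contribute $\Oh{\delta \lg n}$ bits, for a total of $\Oh{(z+\delta)\lg n}$ bits as claimed. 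The one point that deserves the most care — and what I expect to be the main obstacle — is the correctness interaction between the $\iend{R_y} \ge \dst_x$ stopping rule and the suffix-link traversal: one must check that when a suffix link would land on a forbidden node, repeatedly following further suffix links (while correspondingly increasing the virtual left endpoint $\ell$) never skips over an admissible longer match, and that this does not break the $\Oh{n}$ amortization of window moves. I would handle this by observing that successive suffix links give strictly shorter string labels, so each such hop is charged to the same monotone leftward motion of the window's left endpoint, and a forbidden node's entire suffix-link ancestor chain that is still forbidden corresponds to a contiguous block of left-endpoint advances, keeping the total linear.
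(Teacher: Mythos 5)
Your proposal is correct and follows essentially the same route as the paper, which establishes this theorem through the algorithm description itself (AC automaton on the LZ78 dictionary, the standard amortization of \fnChild{} and \fnSuffixLink{} operations against text positions, the $\iend{R_y}$-based stopping rule justified by the monotonicity of factor indices along root-to-node paths, and memoization over a window of length at most the longest LZ78 factor). One small slip: to bound the longest LZ78 factor length~$\ell$ by $\Oh{\sqrt{n}}$, the prefix-closedness argument gives $n = \Om{\ell^2}$ (the $\ell$ prefixes of the longest factor are distinct factors of total length $\Om{\ell^2}$), not $z_{78} = \Om{\ell^2}$ as you wrote; the conclusion is unaffected.
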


To adapt this approach for \iLZEightFPA{}, we need a dynamic dictionary, which gets filled with the LZ78 factors $R'_1, R'_2, \ldots$ starting with the \iLZEightFPA{} factors.
It is possible to make the AC-automaton semi-dynamic in the sense that it supports adding new strings to \DICT{},
for which we are aware of two approaches~\cite{meyer85incremental,hendrian19efficient} whose running times depend on the actual strings indexed by the AC-automaton.

\subsection{Computation with Suffix Trees}\label{secFlexParseLinTime}

The algorithms we propose in the following are based on 
the suffix tree superimposition with the LZ trie of~\cite{fischer18lz}, which we briefly review.

\myblock{Suffix Tree Superimposition}
An observation of \citet[Sect.~3]{nakashima15position} is that 
the {LZ} trie is a connected subgraph of the suffix trie containing its root.
That is because the LZ78 parsing dictionary (or the dictionary of any variant of our semi-greedy parsings) is prefix-closed.
We can therefore simulate the {LZ} trie by marking nodes in the suffix trie.
Since the suffix trie has \Oh{n^2} nodes, we use the suffix tree~\ST{} instead of the suffix trie to save space.
In \ST{}, however, not every {LZ} trie node is represented;
these implicit {LZ} trie nodes are on the \ST{} edges between two \ST{} nodes.
Since the LZ trie is a connected subgraph of the suffix trie sharing the root node,
implicit {LZ} trie nodes on the same \ST{} edge have the property that they are all consecutive and that the first one starts at the first character of the edge.
To represent all implicit LZ trie nodes, it suffices to augment each \ST{} edge~$e$ with a counter counting the number of $e$'s implicit {LZ} trie nodes.
We call this counter an \teigi{exploration counter}, 
and we write~$n_v \in [0\IC{}|e|]$ for the exploration counter of an edge~$e = (u,v)$, which is stored in the lower node~$v$ that $e$ connects to.
Here, $|e|$ denotes the length of $e$'s label.
We say the \teigi{locus} of a substring $S$ in $T$ is the highest \ST{} node whose string label has $S$ as a prefix.
Thus, the locus of an LZ factor $F$ is a node $v$ whose string label is $F$ (then $F$ has an explicit LZ trie node represented by $v$) or contains $F$ as a proper prefix (then $F$ is represented implicitly by the exploration counter~$n_v$).
Additionally, we call an \ST{} node~$v$ an \teigi{edge witness} if $n_v$ becomes incremented during the factorization.
We say that $n_v$ is \teigi{full} if $n_v$ is the length of the string label of the edge connecting to $v$,
meaning that $v$ is an explicit {LZ} trie node.
We additionally stipulate that the root of \ST{} is an edge witness, whose exploration counter is always full.
Then all edge witnesses form a connected sub-graph of \ST{} sharing the root node.
See \cref{figSuffixTreeImposition} for an example.
Finally, \citet[Sect.~4.1]{fischer18lz} gave an \Oh{n}-bits representation of all exploration counters, which however builds on the fact that the parse dictionary is prefix-closed.

\myblock{Linear-Time Computation}
To obtain an \Oh{n \timeSA}-time deterministic solution for \iLZEightFP{} and \iLZEightFPA{}, we make use of the superimposition of \ST{}.
For \iLZEightFP{}, compute the LZ78 factorization as described in \cite[Sect.~4]{fischer18lz} such that all edge witnesses are marked.
The marking is done with the following data structure.

\begin{lemma}[\cite{cole05dynamic}] \label{lemLowestMarkedAncestor}
There is a semi-dynamic lowest common ancestor data structure that can 
(a) find the lowest marked node of a leaf or 
(b) mark a specific node, 
both in constant time.
We can augment \ST{} with this data structure in \Oh{n} time using \Oh{n \lg n} bits of space.
\end{lemma}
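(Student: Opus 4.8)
The plan is to build a micro–macro decomposition of the (static) host tree \ST{} and to answer both operations by a constant amount of word-level work inside small pieces together with a smaller instance of the same problem. First I would root \ST{} and, using the standard bottom-up tree partition, cut it into \Oh{n/g} pairwise vertex-disjoint connected \emph{micro trees}, each containing at most $g=\Theta(\lg n)$ nodes, in such a way that contracting every micro tree to a single point yields a \emph{macro tree} on \Oh{n/g} nodes. This partition, together with, for every node, a pointer to the micro tree containing it and its index inside that micro tree, is computable in \Oh{n} time and stored in \Oh{n\lg n} bits.

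Inside a micro tree $M$ I would keep the set of its currently marked nodes as a single machine word (a bitmask over the $\le g$ intra-$M$ indices) plus \Oh{1} words encoding the shape of $M$. The query ``deepest marked node that is an ancestor of $v$ and still lies in $v$'s own micro tree'' then becomes a lookup into a table indexed by the triple (shape of $M$, current mark bitmask of $M$, intra-$M$ index of $v$); choosing $g=\frac14\lg n$ keeps the total table size $o(n)$ bits, and alternatively the lookup can be replaced by a constant number of most-significant-set-bit word operations, so that the bound is worst-case. A mark operation sets one bit, and in addition may have to notify the macro structure that $M$ just became ``nonempty'' or that the topmost marked node of $M$ moved upward, both of which are detectable in \Oh{1} time.

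For a query at a leaf $\lambda$ I would first consult $\lambda$'s micro tree; if that yields a marked ancestor I return it, and otherwise the answer is the topmost marked node of the nearest macro-ancestor micro tree that is nonempty — which is exactly the same marked-ancestor problem, now on the macro tree. The main obstacle is that the recursion $n\mapsto n/g = n/\Theta(\lg n)$ does not bottom out after \Oh{1} levels, so recursing naively would give super-constant query time; this has to be broken by handling the macro tree with a separate device — it has only \Oh{n/\lg n} nodes, hence admits an \Oh{n}-bit, pointer-based static-tree set-union structure in the spirit of Gabow and Tarjan — and splicing that onto the \Oh{1}-worst-case micro layer. Two further points need care: the reduction of the incremental marked-ancestor problem to a union-find-like structure must cope with the mark order (equivalently, the union order) being online and not known in advance; and one should observe that in the application of this paper the marked nodes always form a connected subtree containing the root, so a query degenerates to locating the frontier node on the root-to-$\lambda$ path and the marked set only grows monotonically — a special case for which the \Oh{1} bounds on both operations are considerably easier to establish.
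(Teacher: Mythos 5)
First, a point of reference: the paper does not prove this lemma at all --- it is imported verbatim from \citet{cole05dynamic} (the same statement is used in \cite{fischer18lz,koppl21nonoverlapping}). So your sketch is not competing with an in-paper argument but with the cited construction, which rests on Cole and Hariharan's worst-case constant-time machinery for maintaining a tree under node insertions with $\Oh{1}$ LCA queries (one keeps a contracted copy of \ST{} restricted to the marked nodes inside that structure), not on a micro--macro decomposition.

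Evaluated on its own terms, your micro layer is fine, but the macro layer has two genuine gaps. (i) The reduction is not ``marked ancestor on the macro tree with a macro node marked iff its micro tree is nonempty'': the root-to-$\lambda$ path traverses each ancestor micro tree $M'$ only along the path from $M'$'s root to the single node at which $\lambda$'s branch exits $M'$, so $M'$ can contain marked nodes yet contain no marked ancestor of $\lambda$; a query routed to the ``nearest nonempty macro ancestor'' may therefore visit arbitrarily many useless micro trees. (Relatedly, even in the correct micro tree the answer is the \emph{deepest} marked node on that specific root-to-exit path, not the ``topmost marked node'' of the micro tree.) (ii) The device you propose for the macro level --- a Gabow--Tarjan-style static-tree set-union structure --- solves the \emph{decremental} marked-ancestor problem: uniting a node with its parent's set corresponds to \emph{removing} a mark. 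Adding a mark \emph{splits} a set, which union-find does not support; this is not the bookkeeping issue of an ``online union order'' that you flag, but a reversal of the operation, and it sits exactly where you yourself locate the difficulty. Finally, the escape hatch in your last sentence --- that the marked nodes always form a connected subtree containing the root, so that only the frontier on the root-to-$\lambda$ path must be located --- is unavailable in this paper: the lemma is invoked precisely for LZD and LZMW, where the author stresses that the dictionaries are not prefix-closed and the marked loci are \emph{not} connected in \ST{}, which is why the exponential-search alternative fails and the full strength of the cited data structure is needed.
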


\begin{algorithm}
\caption{Computing \iLZEightFPA{}.}
  \label{algoFPA}
  \begin{algorithmic}[1]
  \Require query interval $\intervalI$, root is marked
  \If{$\intervalI = \emptyset$}
  \Return
  \EndIf
  \State $\lambda_1 \gets \fnSelectLeaf(\ibeg{\intervalI})$ 
  \Comment{$\Oh{\timeSA}$ time}
	\label{lineFPALambdaOne}
  \State $w_1 \gets \text{~lowest marked ancestor of~} \lambda_1$
	\label{lineFPAAncWOne}
  \If{$w_1 = $ root}
  \Comment{factor is literal}
  \State \Output factor $T[\ibeg{\intervalI}]$ 
  \State \Return by recursing on $[\ibeg{\intervalI}+1..\iend{\intervalI}]$
  \EndIf
  \State $\ell_0 \gets \strdepth(\parent(w_1))+n_w+1, p \gets 0, \ell_{\text{max}} \gets \ell_0$
	\label{lineFPAEllZero}
  \For{$\ell_1 = 1$ to $\ell_0$}
  \Comment{$\Oh{\ell_0 \timeSA}$ time}
  \label{lineFPAforloop}
    \State $\lambda_2 \gets \fnSelectLeaf(\ibeg{\intervalI}+\ell_1)$
	\label{lineFPALambdaTwo}
    \State $w_2 \gets \text{~lowest marked ancestor of~} \lambda_2$
		\label{lineFPAAncWTwo}
    \State $\ell_2 = \strdepth(\parent(w_2))+n_{w_2}+1$ 
	\label{lineFPAEllTwo}
    \If{$\ell_1 + \ell_2 > \ell_{\text{max}}$}
      \State $\ell_{\text{max}} \gets \ell_1 + \ell_2$
      \State $p \gets \ell_1$
    \EndIf
  \EndFor
  \State create reference $T[\ibeg{\intervalI}..\ibeg{\intervalI}+\ell_0]$ with reference ID stored in $w_1$ after having factorized up to position $\ibeg{\intervalI}+\ell_0$
  \label{lineFPAcreateRef}
  \Comment{store the reference at its \ST{} locus~$v$ and increment~$n_v$}
  \State \Output factor $T[\ibeg{\intervalI}..\ibeg{\intervalI}+p-1]$
  \State \Return by recursing on the interval $[\ibeg{\intervalI}+p..]$
  \end{algorithmic}
\end{algorithm}

For \iFPA{}, we mark the nodes dynamically.
By doing so, we can issue the following query for both flexible parsing variants.
For each suffix $T[i..]$, we query the lowest marked ancestor $w$ of the leaf with suffix number $i$. 
The string depth of $w$, obtained in $\Oh{\timeSA}$ time, is the length of the longest reference being a prefix of $T[i..]$.

If we allow \Oh{\timeSA \lg z} time per text position, we can omit the lowest marked ancestor data structure and 
make use of exponential search on \ST{}, as described in~\cite[Section~4]{koppl21nonoverlapping}.
Since $|R_y| \le y$, finding the reference~$R_y$ involves $\Oh{\lg y}$ node visits, and for each of them we pay $\Oh{\timeSA}$ time for computing its string length.
The total time is $\Oh{n \timeSA \lg z}$.
Since the LZ78-dictionary as well as the \iFPA{} dictionary are prefix-closed, the \ST{} superimposition by the LZ trie can be represented in \Oh{n} bits, and thus we need only $\Oh{n \lg \sigma}$ bits overall.

A pseudocode is given in \cref{algoFPA}.
We select the leaf with suffix number $\ibeg{\intervalI}$ (Line~\ref{lineFPALambdaOne})
and retrieve its lowest marked ancestor $w_1$ (Line~\ref{lineFPAAncWOne})
with a string depth of $\ell_0-1$ (Line~\ref{lineFPAEllZero}).
Subsequently, we check for each possible factor length $\ell_1 = |F_x| \in [1..\ell_0]$
via a loop (Line~\ref{lineFPAforloop}) whether a subsequent factor $F_{x+1}$ starting
at $\ibeg{\intervalI}+\ell_1$ gives the largest advance $|F_x| + |F_{x+1}|$ from $\ibeg{\intervalI}$ in the text. 
For that, 
select the leaf with suffix number $\ibeg{\intervalI}+|F_x|$ (Line~\ref{lineFPALambdaTwo})
and retrieve its lowest marked ancestor $w_2$ (Line~\ref{lineFPAAncWTwo})
with a string depth of $\ell_2-1$ (Line~\ref{lineFPAEllTwo}).
Let us focus on the loop of Line~\ref{lineFPAforloop} that takes \Oh{\ell_0 \timeSA} time per factor.
Naively, this time bound can get asymptotically larger than $\Oh{|\intervalI| \timeSA}$ if we only select short factors $\ell_{\max} \ll \ell_0$.
However, we can memoize the results for the lowest marked ancestor for the queried range $T[\ibeg{\intervalI}..\ibeg{\intervalI}+\ell_0]$
such that subsequent queries for the same positions can be performed in constant time until the respective text position has been factorized (the same argument as for the AC automaton approach in \cref{secACautomaton}).
By doing so, we process each position once.
Another matter that needs to be taken care of is the dictionary of references, which we update in Line~\ref{lineFPAcreateRef}.
The idea is to postpone this command for inserting the reference $T[\ibeg{\intervalI}..\ibeg{\intervalI}+\ell_0]$ until text position $\ibeg{\intervalI}+\ell_0$ has been factorized (otherwise we might create a factor using this reference that cannot be decompressed). 
To insert the reference $R = T[\ibeg{\intervalI}..\ibeg{\intervalI}+\ell_0]$, 
\begin{enumerate}
\item find its locus $w$ in \ST{}, which is the lowest marked ancestor of the leaf $\lambda_1$ (the leaf with suffix number~$\ibeg{\intervalI}$),
\item store in $w$ the index of $R$ and increment $n_w$ by one.
\end{enumerate}
The difference to \iLZEightFP{} is what references the parsing dictionary stores.
For \iLZEightFP{} it suffices to do the same steps but computing the LZ78 factors in the suffix tree simultaneously to \iLZEightFP{},
so only Line~\ref{lineFPAcreateRef} needs to be changed.

While the AC automaton solution for \iLZEightFP{} represents the parsing dictionary with a static data structure, we here need updates due to the use of the lowest marked ancestor data structure for finding a selectable reference. It is possible to make the dictionary for the suffix tree-based approach also static by first computing the LZ78 factorization but augment the edge witnesses with the first ending positions of the LZ78 factors they present. We then use these augmented values as keys for a weighted ancestor data structure (\cref{lemWeightedAncestor}) that allows us to jump from a leaf~$\lambda$ to its lowest edge witness ancestor that represents a reference ending before the suffix number of $\lambda$.
However, the time and space complexities do not change for this variation.

\begin{theorem}\label{thmSemiGreedy}
  We can compute \iLZEightFP{} or \iLZEightFPA{} of $T$ in 
  \Oh{n} time with $\Oh{n \lg n}$ bits of space,
  or in \Oh{n \timeSA \lg z} time with $\Oh{n \lg \sigma}$ bits of space,
  where $z$ is the number of factors of the LZ78 or \iFPA{} factorization for computing \iLZEightFP{} or \iLZEightFPA{}, respectively.
  (For \iLZEightFP{}, $z$ here includes the LZ78 factors while \iLZEightFPA{} creates as many references as factors.)
\end{theorem}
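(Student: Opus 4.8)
The plan is to establish the two claimed complexity bounds by analyzing Algorithm \ref{algoFPA} (and its \iLZEightFP{} variant), showing correctness first and then the time/space accounting. For correctness, I would argue by induction on the number of factors produced. The induction hypothesis is that after processing a prefix $T[1..\dst_x-1] = F_1 \cdots F_{x-1}$, the suffix tree superimposition correctly represents exactly the reference dictionary $\DICT$ as it should exist at that point (for \iLZEightFP{}, all LZ78 factors $R_y$ with $\iend{R_y} < \dst_x$; for \iLZEightFPA{}, the factors $R'_y$ with $\iend{R'_y} < \dst_x$), via the edge witnesses and exploration counters. The key claim is that the lowest marked ancestor $w_1$ of the leaf with suffix number $\ibeg{\intervalI}$ has string depth equal to $|R_y|$ where $R_y$ is the longest reference that is a prefix of $T[\dst_x..]$ — this is exactly the characterization of the superimposition from \cite{fischer18lz}, combined with the deferred-insertion invariant (Line~\ref{lineFPAcreateRef}) that guarantees no reference ending at or after $\dst_x$ has yet been inserted. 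Given that, the for-loop over $\ell_1 \in [1..\ell_0]$ literally evaluates $|F_x| + |F_{x+1}|$ for every admissible choice of $|F_x|$, exactly matching the definitions of $\mathcal{J}$ and $p$ given in Section~\ref{secSemiGreedy}, with ties broken toward larger $\ell_1$ by the strict inequality $\ell_1 + \ell_2 > \ell_{\max}$ being checked in increasing order of $\ell_1$ — wait, that breaks ties toward \emph{smaller} $\ell_1$, so I would need to either reverse the loop or change to $\ge$; I will flag this and use the convention that makes the tie-breaking match the definition (longer $F_x$).

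Next I would handle the deferred-insertion subtlety more carefully, since it is the main obstacle. The reference $R = T[\ibeg{\intervalI}..\ibeg{\intervalI}+\ell_0]$ must not be usable for any factor that starts at a position $< \ibeg{\intervalI}+\ell_0+1$, because decompression reconstructs references on the fly and cannot reference something not yet fully emitted. The plan is to show that postponing the increment of $n_w$ until position $\ibeg{\intervalI}+\ell_0$ has been factorized does not change which factorization is produced: creating factors never alters $\DICT$, and every query issued between $\dst_x$ and $\ibeg{\intervalI}+\ell_0$ concerns a text position strictly inside that window, for which the memoized lowest-marked-ancestor value (computed before the pending insertion) is precisely the correct answer. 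I would combine this with the observation that the memoized window is exactly $[\ibeg{\intervalI}..\ibeg{\intervalI}+\ell_0]$ and the next factor $F_{x+1}$ starts inside it, so the recursion's first query hits the memo table. The amortization argument for time then runs: charge each iteration of the for-loop to the text position $\ibeg{\intervalI}+\ell_1-1$; after the first time a position is touched its lowest-marked-ancestor answer is memoized and every later touch costs $\Oh{1}$, so the total work is $\Oh{n}$ memo-hit operations plus $\Oh{n}$ genuine lowest-marked-ancestor / $\strdepth$ / $\fnSelectLeaf$ calls. In the $\Oh{n\lg n}$-bit setting each of these is $\Oh{1}$ by Lemmas~\ref{lemLowestMarkedAncestor} and the constant-time $\strdepth$, $\fnSelectLeaf$; this gives the $\Oh{n}$ bound.

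For the second bound, I would replace the lowest-marked-ancestor structure by the exponential-search procedure of \cite[Sect.~4]{koppl21nonoverlapping}: since $|R_y|\le y \le z$, locating the edge witness for a given leaf costs $\Oh{\lg z}$ node visits, each visit needing a $\strdepth$ evaluation at $\Oh{\timeSA}$ time, for $\Oh{\timeSA\lg z}$ per text position and $\Oh{n\timeSA\lg z}$ overall; memoization over the window preserves the once-per-position guarantee. For space, the suffix tree itself is available in $\Oh{n\lg\sigma}$ bits \cite{munro17cst,fischer18lz}, the exploration counters in $\Oh{n}$ bits \cite[Sect.~4.1]{fischer18lz} since both the LZ78 and \iFPA{} dictionaries are prefix-closed, and the memo table holds at most $\Oh{\sqrt n}$ entries (the longest LZ78 factor length), which is $\oh{n}$, so the total is $\Oh{n\lg\sigma}$ bits. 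The only remaining point is the semi-dynamic marking for \iLZEightFPA{}: here I would note that the edge witnesses are marked incrementally as references are created in Line~\ref{lineFPAcreateRef}, and Lemma~\ref{lemLowestMarkedAncestor}(b) supports each such marking in $\Oh{1}$ time, so the dynamic version does not affect the asymptotics. The main obstacle throughout is making the deferred-insertion-plus-memoization invariant airtight — that the produced factorization is exactly the one defined in Section~\ref{secSemiGreedy} despite reordering dictionary updates — and I would devote the bulk of the proof to that invariant.
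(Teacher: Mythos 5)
Your proposal matches the paper's argument essentially step for step: the suffix-tree superimposition with lowest-marked-ancestor queries (\cref{lemLowestMarkedAncestor}), memoization over the window $[\ibeg{\intervalI}..\ibeg{\intervalI}+\ell_0]$ so that each text position is queried once and the for-loop amortizes to $\Oh{n}$ total work, deferred insertion of references until their ending position is factorized, and the exponential-search plus $\Oh{n}$-bit exploration-counter variant (valid because both dictionaries are prefix-closed) for the $\Oh{n\,\timeSA\lg z}$-time, $\Oh{n\lg\sigma}$-bit bound. Your side remark that the strict inequality in \cref{algoFPA}, evaluated for increasing $\ell_1$, breaks ties toward a \emph{shorter} $F_x$ contrary to the stated convention is a correct minor catch about the pseudocode, not a gap in the approach.
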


\begin{figure}[t]
  \begin{minipage}{0.48\linewidth}
    \centering LZ78~Factorization

  \includegraphics[width=\textwidth,page=1]{./suffixtree/superimposition}
  \end{minipage}
  \begin{minipage}{0.48\linewidth}
    \centering
    LZD~Factorization

  \includegraphics[width=\textwidth,page=2]{./suffixtree/superimposition}
  \end{minipage}
  \centering
  \caption{Suffix tree superimposition by the LZ trie of the LZ78 (left) and the LZD (right) factorization on our running example. 
    Small circles in green (\protect\PatternLegendS{fill=solarizedGreen}) on edges are implicit LZ trie nodes.
    The \ST{} nodes colored in green are full.
    In the LZD factorization, the edge witness with label 4 is colored in red (\protect\PatternLegendS{fill=solarizedRed}); 
    while not representing directly an LZ trie node, it witnesses factor $\texttt{ababb}$ represented by a green circle on its incoming edge.
  The LZD factorization is not prefix-closed, for instance the nodes marked in green are not connected in the suffix \emph{trie}.}
  \label{figSuffixTreeImposition}
\end{figure}

\section{LZ78 Factorization Variants}\label{lzEightVariations}
In what follows, we show applications of the technique of \cref{secFlexParseLinTime} for variants of the LZ78 factorization, 
and also give solutions to their substring compression variants.

\subsection{Lempel--Ziv Double (LZD)}
LZD~\cite{goto15lzd} is a variation of the LZ78 factorization.
A factorization $F_1 \cdots F_z$ of $T$ is LZD if
$F_x = R_1 \cdot R_2$ with
$R_1, R_2 \in \{F_1,\ldots,F_{x-1}\} \cup \Sigma$ such that
$R_1$ and $R_2$ are respectively the longest possible prefixes of $T[\dst_x..]$ and of $T[\dst_x+|R_1|..]$,
where $\dst_x$ denotes the starting position of $F_x$.
When computing $F_x$, the LZD dictionary stores the phrases
$\{F_y \cdot F_{y'} : y,y' \in [1..x-1]\} \cup \{F_y c : y \in [0..x-1], c \in \Sigma\}$.
See \cref{figFactLZD} for an example of the LZD factorization.
Like in LZ78, references are factors.

\begin{figure}
  \begin{minipage}{0.47\linewidth}
  \includegraphics[page=3,clip]{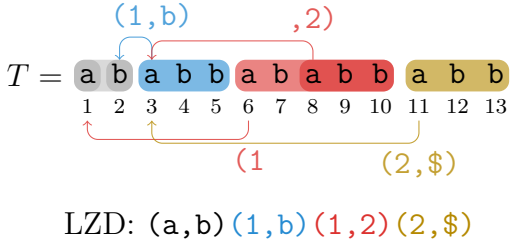}
  \end{minipage}
  \hfill
  \begin{minipage}{0.52\linewidth}
    \caption{LZD factorization of our running example\J{ $T = \texttt{ababbababbabb}$},
    given by
    $F_1 =\texttt{ab}$,
    $F_2 =\texttt{abb} = F_1 \cdot \texttt{b}$,
    $F_3 =\texttt{ababb} = F_1 \cdot F_2$, and
    $F_4 =\texttt{abb\$} = F_2 \cdot \$$.
    The factor $F_3$ has two referred indexes, which are visualized by two arrows pointing to the referred factors.
    We encode $F_4$ with the artificial character $\$$ to denote that we ran out of characters.
  }
  \label{figFactLZD}
  \end{minipage}
\end{figure}

A computational problem for LZD and LZMW is that their dictionaries are not prefix-closed in general. 
Consequently, their LZ tries are not necessarily connected subgraphs of the suffix trie: 
Indeed, some nodes in the suffix trie can be "jumped over" (cf.~\cref{figSuffixTreeImposition}).
Unfortunately, a requirement of the $\Oh{n}$-bits representation of the LZ trie built upon \ST{} is that the LZ trie is prefix-closed~\cite{fischer18lz}, which is not the case in the following two variations. In that light, we give up compact data structures and let each node store its exploration counter explicitly.
Another negative consequence is that the search for the longest reference cannot be performed by the exponential search used in \cref{secFlexParseLinTime} because some \ST{} node on a path from the \ST{} root to a locus of an LZD factor may not be edge witnesses.
It thus looks like that using the dynamic marked ancestor data structure of \cref{lemLowestMarkedAncestor} is the only efficient way for that task.

Finally, given we just have computed the factor $F_x = F_w \cdot F_y$,
we need to find the edge witness $v$ of $F_x$ and mark it.
For that, we search the locus of $F_x$, which is either represented by $v$, 
or is on an edge to $v$.
To find this locus, it suffices to traverse the path from the root to the leaf with suffix number $\dst_x$.
The number of visited nodes is upper bounded by the factor length $|F_x|$, and thus we traverse $\Oh{n}$ nodes in total.
For the substring compression problem, we want to get rid of the linear dependency in the text length.
For that, we jump to the locus with the weighted ancestor data structure introduced in \cref{lemWeightedAncestor}.
Our result is as follows.
\begin{theorem}\label{thmLZD}
There is a data structure that,
given a query interval $\intervalI$,
can compute LZD in $\Oh{\LZDSub}$ time, where $\LZDSub$ is the number of LZD factors.
This data structure can be constructed in \Oh{n} time, 
using $\Oh{n \lg n}$ bits of working space.
\end{theorem}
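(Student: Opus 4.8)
The plan is to adapt the suffix-tree superimposition machinery of \cref{secFlexParseLinTime} to LZD, with the crucial modification that the LZ trie is no longer prefix-closed, so we store exploration counters explicitly (costing $\Oh{n \lg n}$ bits but no more) and we use the semi-dynamic lowest-marked-ancestor data structure of \cref{lemLowestMarkedAncestor} rather than exponential search to locate the longest reference. Concretely, I would first establish the building blocks: the suffix tree of $T$ with \fnSelectLeaf{}, \strdepth{}, \lmostleaf{}, \rmostleaf{}, \levelanc{} all in $\Oh{1}$ time under the $\Oh{n\lg n}$-bit setting (\timeSA{} $=\Oh{1}$), the marked-ancestor structure of \cref{lemLowestMarkedAncestor}, and the weighted ancestor data structure of \cref{lemWeightedAncestor}; all of these are constructible in $\Oh{n}$ time within $\Oh{n\lg n}$ bits, matching the claimed construction bound.

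Next I would describe the factorization loop. Suppose we have factorized $T[\ibeg{\intervalI}..\dst_x-1]$ and must compute $F_x = R_1 \cdot R_2$ starting at $\dst_x$. For $R_1$: take $\lambda \gets \fnSelectLeaf(\dst_x)$, query its lowest marked ancestor $w_1$; the longest dictionary element that is a prefix of $T[\dst_x..]$ has length $\ell_1 := \strdepth(\parent(w_1)) + n_{w_1}$ (or length $1$ as a literal if $w_1$ is the root with the incoming character not yet a factor) — exactly as in \cref{algoFPA} but without the semi-greedy look-ahead, since LZD is greedy in each of its two halves. Then for $R_2$: repeat the same query at position $\dst_x + \ell_1$ to get $\ell_2$, so $F_x = T[\dst_x..\dst_x+\ell_1+\ell_2-1]$, trimming at $\iend{\intervalI}$ and handling the ``ran out of characters'' border case with the artificial $\$$ as in \cref{figFactLZD}. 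Finally, to insert the new factor $F_x$ into the dictionary, I locate its locus: the locus is either an \ST{} node whose string label is $F_x$ or an implicit node on an edge into such a node $v$; I find $v$ by a weighted ancestor query from $\fnSelectLeaf(\dst_x)$ at string depth $|F_x|$ (rounding up to the nearest explicit node), then store the new factor index in $v$, increment $n_v$, and if $n_v$ was previously zero mark $v$ via \cref{lemLowestMarkedAncestor}. Each factor thus costs $\Oh{1}$ queries, giving $\Oh{\LZDSub}$ total query time; recursing on $[\dst_x+|F_x|..\iend{\intervalI}]$ finishes the parse.

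The correctness argument has two parts. First, the lowest marked ancestor correctly reports the longest dictionary prefix: the marked \ST{} nodes together with the exploration counters encode precisely the set of LZD phrases inserted so far, and since any dictionary element that is a prefix of $T[\dst_x..]$ corresponds to an ancestor (explicit or implicit) of $\lambda$, the \emph{lowest} marked ancestor gives the longest such element — this is where I must be careful, because unlike the prefix-closed case there can be \ST{} nodes on the root-to-$\lambda$ path that are \emph{not} edge witnesses, so I need to argue that marking only the edge-witness nodes (those whose counter got incremented), plus treating the locus node's counter correctly, still lets the lowest-marked-ancestor query recover the longest prefix; the point is that every actual dictionary element ends either at an explicit edge-witness node or at an implicit node counted by $n_v$ for an edge-witness $v$, and the longest one is found at the lowest marked ancestor with the counter contributing the final few characters. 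Second, the decompressibility/ordering constraint: references in $F_x$ must be factors $F_y$ with $y < x$, which is automatic here because we only insert $F_x$ into the dictionary \emph{after} computing it, so the marked-ancestor query at $\dst_x$ sees exactly $\{F_1,\dots,F_{x-1}\}$.

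The main obstacle I expect is the locus-finding step for inserting $F_x$ into the dictionary while keeping the per-factor cost $\Oh{1}$: because the LZD trie is not a connected subgraph of the suffix trie, the naive root-to-leaf walk used in \cref{secFlexParseLinTime} traverses $\Theta(|F_x|)$ nodes — fine for the linear-time full-text algorithm but fatal for the $\Oh{\LZDSub}$-time substring query, since $\sum_x |F_x| = |\intervalI|$ which is not $\Oh{\LZDSub}$. The fix is precisely the weighted ancestor data structure of \cref{lemWeightedAncestor}: from $\fnSelectLeaf(\dst_x)$ we jump directly to the \ST{} node at string depth $\lceil$ something $\rceil \ge |F_x|$ in $\Oh{1}$ time, then adjust the exploration counter on its incoming edge. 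I also need to double-check the edge cases where $R_1$ or $R_2$ is a single literal character $c \in \Sigma$ (the dictionary implicitly contains all such, so the ``literal'' branch must still update counters/marks appropriately for the combined $F_x$, not for $c$ alone), and the terminal case where $T[\dst_x..\iend{\intervalI}]$ cannot be split into two nonempty dictionary-expressible pieces, resolved by the $\$$ convention. Once these are in place, the stated bounds follow: $\Oh{n}$ construction within $\Oh{n\lg n}$ bits, and $\Oh{\LZDSub}$ query time.
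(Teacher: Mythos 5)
Your proposal matches the paper's proof essentially step for step: two lowest-marked-ancestor queries per factor (one for each half $R_1$, $R_2$) using the structure of \cref{lemLowestMarkedAncestor}, factor lengths read off as $\strdepth(\parent(w))+n_w$ from explicitly stored exploration counters, and a weighted ancestor query (\cref{lemWeightedAncestor}) to jump to the locus of the new factor in $\Oh{1}$ time instead of the root-to-leaf walk — which is exactly the obstacle the paper identifies and resolves the same way. The construction and query bounds follow identically, so the argument is correct and not a different route.
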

\begin{proof}
We list the following steps in \cref{algoLZD}.
For each factor~$F_x$ we want to determine, 
compute first the lowest marked ancestor~$w_1$ of the \ST{} leaf~$\lambda$ with suffix number $\dst_x$ (Line~\ref{lineLZDAncWOne}).
Say the exploration counter of $w_1$ plus the string depth of $w_1$'s parent is $\ell_1$ (Line~\ref{lineLZDEllOne}), 
compute the lowest marked ancestor~$w_2$ of the leaf with suffix number $\dst_x+\ell_1$ (Line~\ref{lineLZDAncWTwo}).
Given the exploration counter of $w_2$ combined with the string depth of $w_2$'s parent is $\ell_2$ (Line~\ref{lineLZDEllTwo}), 
$F_x$ has length $\ell_1+\ell_2$,
and its locus is on the path between $w_1$ and $\lambda$ on the string depth $\ell_1+\ell_2$ (Line~\ref{lineLZDcreateRef}).
After finding the locus~$u$ of $F_x$, we mark~$u$, 
increase its exploration counter, and continue with processing $F_{x+1}$.

In total, for each factor~$F_x$, 
we use 
\begin{itemize}
	\item two lowest marked ancestor queries, 
	\item a weighted ancestor query, and 
	\item mark a node or increase its exploration counter.
\end{itemize}
Each step takes constant time.
\end{proof}

\begin{algorithm}
  \caption{Computing LZD within the bounds claimed in \cref{thmLZD}.}
  \label{algoLZD}
  \begin{algorithmic}[1]
  \Require query interval $\intervalI$, root is marked
  \If{$\intervalI = \emptyset$}
  \Return
  \EndIf
  \State $\lambda_1 \gets \fnSelectLeaf(\ibeg{\intervalI})$
  \Comment{$\Oh{\timeSA}$ time}
  \State $w_1 \gets \text{~lowest marked ancestor of~} \lambda_1$
	\label{lineLZDAncWOne}
  \If{$w_1 = $ root}
  \Comment{left-hand side of factor is literal}
  \State $\ell_1 \gets 1$
  \State \Output $T[\ibeg{\intervalI}]$ as left-hand side of factor 
  \Else
	\State $\ell_1 \gets \strdepth(\parent(w_1))+n_{w_1}$
	\label{lineLZDEllOne}
  \State \Output reference stored it $w_1$ as left-hand side of factor 
  \EndIf
  \State $\lambda_2 \gets \fnSelectLeaf(\ibeg{\intervalI}+\ell_1)$
  \label{lineLZDsecondLeaf}
  \Comment{$\Oh{\timeSA}$ time}
  \State $w_2 \gets \text{~lowest marked ancestor of~} \lambda_2$
	\label{lineLZDAncWTwo}
  \If{$w_2 = $ root}
  \Comment{right-hand side of factor is literal}
  \State $\ell_2 \gets 1$
  \State \Output $T[\ibeg{\intervalI}+\ell_1]$ as right-hand side of factor
  \Else
	\State $\ell_2 \gets \strdepth(\parent(w_2))+n_{w_2}$
	\label{lineLZDEllTwo}
  \State \Output reference stored it $w_2$ as right-hand side of factor
  \EndIf

  \State create reference for factor $T[\ibeg{\intervalI}..\ibeg{\intervalI}+\ell_1+\ell_2-1]$
  \label{lineLZDcreateRef}
  \Comment{access the locus with weighted ancestor query}
  \State \Return by recursing on the interval $[\ibeg{\intervalI}+\ell_1+\ell_2..]$
  \end{algorithmic}
\end{algorithm}

\subsection{Lempel--Ziv--Miller--Wegman (LZMW)}
The factorization $T = F_1 \cdots F_z$ is the LZMW parsing of $T$ if,
for every $x \in [1..z]$, $F_x$ is the longest
prefix of $T[\dst_x..]$ with $F_x \in \{ F_{y-1} F_{y} : y \in [2..\dst_x - 1] \} \cup \Sigma$ where 
$\dst_x = 1 + \sum_{y=1}^{x-1} |F_y|$.
The LZMW dictionary for computing $F_x$ is the set of strings $\bigcup_{y \in [2..x-1]} ( F_{y-1} \cdot F_{y} ) \cup \Sigma$.
See \cref{figFactLZMW} for an example of the LZMW factorization.

\begin{figure}
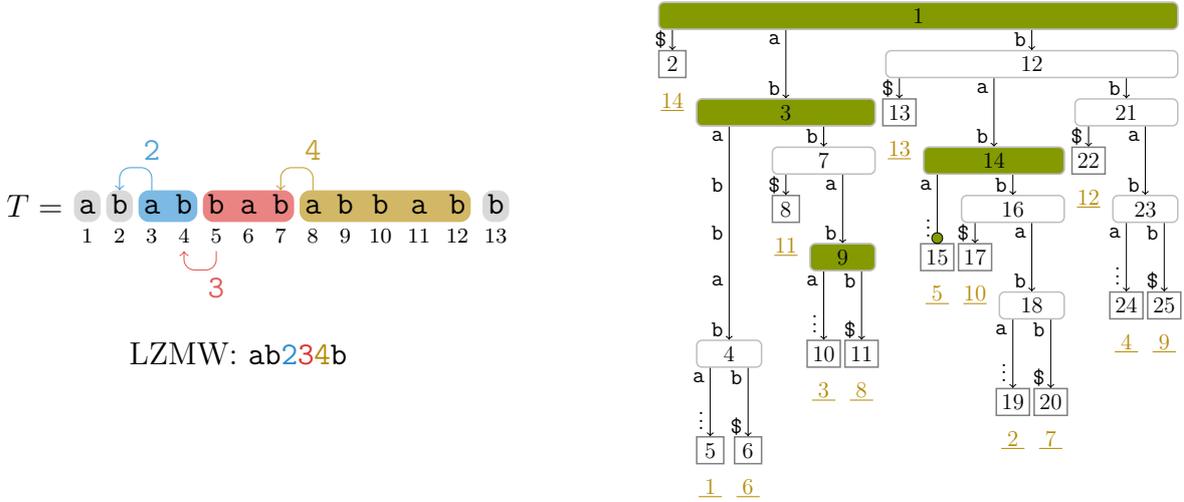

  \begin{minipage}{0.47\linewidth}
  \includegraphics[page=2,clip]{./factorization/fact.pdf}
  \end{minipage}
  \hfill
  \begin{minipage}{0.45\linewidth}
  \includegraphics[width=\textwidth,page=4]{./suffixtree/superimposition}
  \end{minipage}

  \caption{LZMW factorization of our running example $T = \texttt{ababbababbabb}$,
    given by
    $F_1 =\texttt{a}$,
      $F_2 =\texttt{b}$,
      $F_3 =\texttt{ab} = F_1 F_2$,
      $F_4 =\texttt{bab} = F_2 F_3$,
      $F_5 =\texttt{abbab} = F_3 F_4$, and
      $F_6 =\texttt{b}$.
    \textbf{Left:} We encode the factors such that a number denotes the index of a referred factor.
      Writing $x$ means that we refer to the string $F_{x-1}F_{x}$.
			\textbf{Right:} \ST{} superimposed by the LZ trie for the LZMW factorization decorating $F_x \cdot F_{x+1}$, where $F_4 F_5$ is an implicit LZ trie node on the edge leading to the leaf with suffix number 5.
  }
  \label{figFactLZMW}
\end{figure}

\begin{theorem}\label{thmLZMW}
There is a data structure that,
given a query interval $\intervalI$,
can compute LZMW in $\Oh{\LZMWSub}$ time, where $\LZMWSub$ is the number of LZMW factors.
This data structure can be constructed in \Oh{n} time, 
using $\Oh{n \lg n}$ bits of working space.
\end{theorem}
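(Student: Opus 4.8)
The plan is to reuse, almost verbatim, the machinery behind \cref{thmLZD}: superimpose the LZ trie of the LZMW dictionary onto the suffix tree of $T$ as in \cref{secFlexParseLinTime}, but — exactly as for LZD — store every exploration counter $n_v$ explicitly (the LZMW dictionary is not prefix-closed either, cf.\ \cref{figFactLZMW}) and drive the factorization with the dynamic lowest-marked-ancestor structure of \cref{lemLowestMarkedAncestor} rather than with exponential search. The invariant I would maintain is that the marked \ST{} nodes are precisely the edge witnesses and that $n_v$ stores the offset, inside the label of the incoming edge of $v$, of the \emph{deepest} dictionary entry whose locus lies on that edge. Then for any text position $p$ the lowest marked ancestor $w$ of $\fnSelectLeaf(p)$ yields in $\Oh{1}$ time the longest LZMW-dictionary entry that is a prefix of $T[p..]$, namely of length $\strdepth(\parent(w))+n_w$, and a literal factor precisely when $w$ is the root.

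With that primitive the per-factor routine is even simpler than for LZD. Since LZMW is greedy (it takes the longest matching prefix, with no look-ahead over candidate factor lengths as in the flexible parsings of \cref{secFlexParseLinTime}), to produce $F_x$ starting at $\dst_x$ I would do a single lowest-marked-ancestor query on $\fnSelectLeaf(\dst_x)$, read off $\ell := |F_x|$ and, when the answer is not the root, the reference index stored there; the last factor is trimmed to $\iend{\intervalI}$, the border case we omit as elsewhere. The one genuinely new step is the dictionary update: having computed $F_x$, and before moving on to $F_{x+1}$, I would insert the string $F_{x-1}F_x = T[\dst_{x-1}..\dst_x+\ell-1]$ by a weighted-ancestor query (\cref{lemWeightedAncestor}) on $\fnSelectLeaf(\dst_{x-1})$ at string depth $|F_{x-1}|+|F_x|$ to reach its locus $v$, assign it a fresh reference index, raise $n_v$ to $\max\bigl(n_v,\,|F_{x-1}|+|F_x|-\strdepth(\parent(v))\bigr)$, mark $v$, and then recurse on $[\dst_x+\ell..\iend{\intervalI}]$. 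Note that the inserted string ends strictly before $\dst_{x+1}$, so it is a legal reference for every later factor, which is exactly what the dictionary $\bigcup_{y}(F_{y-1}F_{y})\cup\Sigma$ prescribes; the base case is that $F_1$ and $F_2$ are forced to be literals and $F_1F_2$ is the first entry created.

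The budget is then routine: store the suffix tree, the suffix array (so $\fnSelectLeaf$ and $\timeSA$ are constant-time), an $\RmQ$ structure on the LCP array for $\strdepth$ (\cref{lemRMQ}), the weighted-ancestor structure of \cref{lemWeightedAncestor}, and the lowest-marked-ancestor structure of \cref{lemLowestMarkedAncestor}; all of these fit in $\Oh{n\lg n}$ bits and build in $\Oh{n}$ time, and each LZMW factor costs $\Oh{1}$ (one $\fnSelectLeaf$, one lowest-marked-ancestor query, one weighted-ancestor query, one $\strdepth$, one mark/counter update), for $\Oh{\LZMWSub}$ query time overall.

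I expect the main work to be the correctness argument for the retrieval primitive in the absence of prefix-closedness — essentially verifying that the longest dictionary entry that is a prefix of $T[p..]$ has its locus on the root-to-$\fnSelectLeaf(p)$ path with no marked node below it there, and that, because an \ST{} edge is non-branching, any dictionary entry sitting deeper on that path (in particular the deepest one on the relevant edge, which is what $n_w$ records) is automatically a prefix of $T[p..]$, so that always keeping the maximum offset in $n_v$ discards nothing. This is the same reasoning already needed for \cref{thmLZD}; once it is in place, the only LZMW-specific wrinkle is that the newly inserted entry $F_{x-1}F_x$ is anchored at the earlier position $\dst_{x-1}$ rather than at the just-queried $\dst_x$, which merely means remembering one extra position per step and is precisely why the insertion must be timed to occur only after $F_x$ has been fixed.
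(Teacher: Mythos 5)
Your proposal is correct and follows essentially the same route as the paper's proof: one lowest-marked-ancestor query per factor to read off the greedy longest match as $\strdepth(\parent(w))+n_w$, followed by marking the locus of $F_{x-1}F_x$ (reached via the weighted-ancestor structure anchored at $\dst_{x-1}$), with explicit exploration counters because the LZMW dictionary is not prefix-closed. The extra care you take with the semantics of $n_v$ (keeping the maximum offset of a dictionary entry on the incoming edge) and with the timing of the insertion is exactly what the paper's terser argument relies on implicitly.
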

\begin{proof}
For each factor~$F_x$, 
compute the lowest marked ancestor~$v$ of the \ST{} leaf with suffix number $\dst_x$.
The sum of the exploration counter of $v$ with the string depth of its parent is 
the length of $F_x$.
Finally, we mark the locus of $F_{x-1}F_x$ in \ST{}.
\J{}The difference to LZD is that
(a) we mark the locus of $F_{x-1} \cdot F_x$ instead of $F_x$ and that
(b) we have one lowest marked ancestor query instead of two per factor.
\end{proof}
To change \cref{algoLZD} to compute the LZMW factorization, we remove all commands concerning the leaf~$\lambda_2$
starting with Line~\ref{lineLZDsecondLeaf}, and store the reference $F_{x-1} \cdot F_x$ instead of just $F_x$ at Line~\ref{lineLZDcreateRef}.

\subsection{\texorpdfstring{\Oh{n \lg \sigma}}{O(n lg sigma)}-Bits Solution}

We can get down to $\Oh{n \lg \sigma}$ bits of space if we allow the time to linearly depend on $n$.
Then we can afford to mark all \ST{} ancestors of the loci of the computed LZD factors with an extra bit vector~$B$ of length $n$.
By doing so, the set of nodes marked by $B$ forms a sub-graph of the suffix tree,
whose lowest nodes leading to a suffix tree leaf we can query by exponential search using level ancestor queries.
We hence can find $w_1$ in Line~\ref{lineLZDAncWOne} of \cref{algoLZD} in $\Oh{\lg z}$ time.
On the downside, we mark \Oh{n} nodes in $B$ during the computation of LZD\@.

Next, we need to get rid of 
(1) the lowest marked ancestor data structure and 
(2) the $\Oh{n \lg n}$-bits representation of the exploration counters.
First, we overcome the computation bottleneck to find the locus $u$ of the new factor we want to insert into the LZ trie.
This locus $u$ is on the path on the first selected leaf $\lambda_1$ and its lowest marked ancestor~$w_1$. 
We can find $u$ with a binary search on the depth $d \mapsto \levelanc(\lambda_1, d)$ with $\strdepth$ being the key to evaluate.
Finding $u$ takes $\Oh{\timeSA \lg n}$ additional time per factor.

Second, we store the exploration values of each edge witness in a dynamic balanced binary search tree using $\Oh{z \lg n} = \Oh{n \lg \sigma}$ bits of space, where $z = \Oh{n / \log_\sigma n}$ denotes the number of factors of LZD or LZMW.
Lookups and updates of any exploration value thus costs $\Oh{\lg z} = \Oh{\lg n}$ time.

\begin{theorem}\label{thmLZDSpace}
We can compute LZD or LZMW in $\Oh{n \timeSA \lg n} = \Oh{n \lg^{1+\epsilon} n}$ time,
using $\Oh{n \lg \sigma}$ bits of working space.
\end{theorem}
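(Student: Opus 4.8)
The plan is to build on the $\Oh{n\lg n}$-bits algorithm of \cref{thmLZD} (and its LZMW analogue, \cref{thmLZMW}), and replace each component that costs $\Om{n\lg n}$ bits by a slower but more space-frugal substitute, while showing the extra time factor is only $\Oh{\timeSA\lg n}=\Oh{\lg^{1+\epsilon} n}$ per factor. There are exactly three such components: (i) the lowest-marked-ancestor structure of \cref{lemLowestMarkedAncestor}; (ii) the weighted-ancestor structure of \cref{lemWeightedAncestor} used to jump to the locus of a new factor; and (iii) the explicit $\Oh{n\lg n}$-bits array of exploration counters $n_v$. The suffix tree itself is taken in its $\Oh{n\lg\sigma}$-bits incarnation (\citet{munro17cst,fischer18lz}), so that $\strdepth(v)$ costs $\Oh{\timeSA}=\Oh{\log_\sigma n}$ and the navigational primitives $\fnDepth$, $\lmostleaf$, $\rmostleaf$, $\levelanc$, $\fnSelectLeaf$ are available within this budget.

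First I would handle (i): instead of the dynamic marked-ancestor structure, maintain a plain bit vector $B[1..n]$ indexed by preorder rank, with a bit set for \emph{every} ancestor of every locus inserted so far. Because the LZ trie (even though not prefix-closed for LZD/LZMW) is still \emph{connected through the root} once we close it under ancestors, the set of $B$-marked nodes forms a connected subtree containing the root, so the lowest marked ancestor of a leaf $\lambda$ can be found by exponential/binary search on $d\mapsto\levelanc(\lambda,d)$, testing at each step whether the returned node is marked in $B$ — this costs $\Oh{\timeSA\lg n}$ time per query. The cost of this choice is that we set $\Oh{n}$ bits in $B$ over the whole run (each node marked at most once), which is fine for the time bound. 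Next, (ii): the locus $u$ of the freshly computed factor $F_x$ lies on the root-to-$\lambda_1$ path at string depth $\ell_1+\ell_2$ (for LZD) or $\ell_x$ (for LZMW), so I would find it by the same binary search $d\mapsto\levelanc(\lambda_1,d)$, now using $\strdepth$ as the search key; the key values are monotone along the path, so this is a valid $\Oh{\timeSA\lg n}$-time search, eliminating the weighted-ancestor data structure. Finally (iii): store the nonzero exploration counters only, keyed by the preorder rank of the edge witness, in a balanced binary search tree; there are $\Oh{z}=\Oh{n/\log_\sigma n}$ of them, so this is $\Oh{z\lg n}=\Oh{n\lg\sigma}$ bits, and each lookup/update is $\Oh{\lg z}=\Oh{\lg n}$ time. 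A counter not present in the tree is read as $0$.

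Putting the pieces together: per factor we perform one (LZMW) or two (LZD) lowest-marked-ancestor searches at $\Oh{\timeSA\lg n}$ each, two $\strdepth$ evaluations on parents at $\Oh{\timeSA}$ each, one locus search at $\Oh{\timeSA\lg n}$, one BST lookup plus one BST update at $\Oh{\lg n}$ each, and $\Oh{\lg n}$ bit-settings in $B$ (one per newly marked ancestor; amortized $\Oh{n}$ over the run). Summing over all $z$ factors and absorbing the amortized $B$-marking into the $\Oh{n}$ term, the total is $\Oh{n + z\,\timeSA\lg n}=\Oh{n\,\timeSA\lg n}=\Oh{n\lg^{1+\epsilon} n}$ time, and the space is $\Oh{n\lg\sigma}$ bits for the compressed suffix tree, $n+\oh{n}$ bits for $B$ with its rank/select support (or none if we only do membership tests), and $\Oh{n\lg\sigma}$ bits for the exploration-counter BST. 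Correctness is inherited verbatim from \cref{thmLZD}/\cref{thmLZMW}, since every replaced query returns exactly the same node or value; only the running time changes. The main obstacle I anticipate is the monotonicity/connectivity argument underpinning the two binary searches: one must verify that for LZD and LZMW the ancestor-closed set of loci is genuinely a root-containing connected subtree (so that "is marked in $B$" flips monotonically from true to false as $d$ increases along a root-to-leaf path), and that $\strdepth$ along that path is strictly increasing with depth (true in the suffix tree, since every edge has a nonempty label). Both facts hold, but they are the place where the non-prefix-closed nature of LZD/LZMW could mislead, so the proof should state them explicitly rather than appeal to the prefix-closed machinery of \cref{secFlexParseLinTime}.
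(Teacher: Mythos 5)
Your proposal is correct and follows essentially the same route as the paper: mark all suffix-tree ancestors of the inserted loci in a length-$n$ bit vector and recover the lowest marked ancestor by exponential/binary search over $d\mapsto\levelanc(\lambda,d)$, locate the new factor's locus by a binary search keyed on $\strdepth$, and keep the exploration counters of the $\Oh{z}$ edge witnesses in a balanced BST of $\Oh{z\lg n}=\Oh{n\lg\sigma}$ bits. Your explicit justification of the monotonicity of the two binary searches is a welcome addition the paper leaves implicit, but it does not change the argument.
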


\section{Evaluation}\label{secEvaluation}

For the following evaluations, we used a machine with Intel Xeon Gold 6330 CPU, Ubuntu 22.04, and gcc version 11.4.
The evaluations were evaluated on datasets from the Pizza\&Chili~\cite{ferragina08compressed} corpus, the Canterbury corpus~\cite{arnold97corpus}, and the Calgary corpus~\cite{bell89modeling}.

\subsection{Flexible Parsing}\label{secEvalFlexParse}
As far as we are aware of, only compiled binaries for computing a binary-encoded \iLZWFP{} or \iLZWFP{} factorization on Solaris and Irix platforms are available~\footnote{\url{https://www.dcs.warwick.ac.uk/~nasir/work/fp/}, accessed 12th of September 2024.}.
We here provide source code for both flexible parsings written in Python at \url{https://github.com/koeppl/lz78flex}.
The code is not algorithmically engineered, but suitable as a reference implementation.
We conducted experiments to empirically evaluate the change in the number of factors compared with (classic) LZ78.
In \cref{tabFlex} we observe that there is no instance where a flexible parsing instance produces more factors than LZ78.
Instead, the factorizations have up to 87.13\% fewer factors that LZ78.
Despite that we have no guarantee about an upper bound on the factors for \iLZEightFPA{} in contrast to \iLZEightFP{}, \iLZEightFPA{} almost always has fewer factors than \iLZEightFP{}; an exception is the dataset \textsc{E.coli}.

\newcommand{\zCl}{\ensuremath{z_{\textup{78}}}}
\newcommand{\zFP}{\ensuremath{z_{\textup{\iLZEightFP}}}}
\newcommand{\zFPA}{\ensuremath{z_{\textup{\iLZEightFPA}}}}

\begin{table}[t]
\centerline{\begin{tabular}{l*{6}{r}}
\toprule
text & $n$ [K] & $\zCl$ [K] & $\zFP$ [K] & $\frac{\zFP}{\zCl}$ & $\zFPA$ [K] & $\frac{\zFPA}{\zCl}$
\\\midrule
\Dataset{E.coli} & \num{4638.69} & \num{491.11} & \num{488.36} & 99.44\% & \num{490.19} & 99.81\% \\
\Dataset{alice29.txt} & \num{148.48} & \num{28.73} & \num{27.87} & 97.03\% & \num{27.50} & 95.72\% \\
\Dataset{asyoulik.txt} & \num{125.18} & \num{25.59} & \num{24.82} & 97.00\% & \num{24.50} & 95.73\% \\
\Dataset{bib} & \num{111.26} & \num{21.46} & \num{20.40} & 95.05\% & \num{19.49} & 90.80\% \\
\Dataset{bible.txt} & \num{4047.39} & \num{490.81} & \num{472.51} & 96.27\% & \num{453.89} & 92.48\% \\
\Dataset{book1} & \num{768.77} & \num{131.07} & \num{128.07} & 97.71\% & \num{126.94} & 96.85\% \\
\Dataset{book2} & \num{610.86} & \num{102.51} & \num{98.76} & 96.34\% & \num{96.11} & 93.75\% \\
\Dataset{fields.c} & \num{11.15} & \num{2.79} & \num{2.66} & 95.40\% & \num{2.58} & 92.46\% \\
\Dataset{grammar.lsp} & \num{3.72} & \num{1.07} & \num{1.03} & 95.89\% & \num{0.98} & 91.13\% \\
\Dataset{lcet10.txt} & \num{419.24} & \num{71.12} & \num{68.78} & 96.71\% & \num{67.37} & 94.72\% \\
\Dataset{paper1} & \num{53.16} & \num{12.17} & \num{11.74} & 96.52\% & \num{11.49} & 94.44\% \\
\Dataset{paper2} & \num{82.20} & \num{17.34} & \num{16.81} & 96.97\% & \num{16.60} & 95.75\% \\
\Dataset{paper3} & \num{46.53} & \num{10.91} & \num{10.60} & 97.21\% & \num{10.49} & 96.16\% \\
\Dataset{paper4} & \num{13.29} & \num{3.65} & \num{3.53} & 96.74\% & \num{3.51} & 96.30\% \\
\Dataset{paper5} & \num{11.95} & \num{3.41} & \num{3.30} & 96.74\% & \num{3.29} & 96.36\% \\
\Dataset{paper6} & \num{38.11} & \num{9.15} & \num{8.82} & 96.41\% & \num{8.66} & 94.68\% \\
\Dataset{plrabn12.txt} & \num{471.16} & \num{84.11} & \num{82.25} & 97.80\% & \num{81.54} & 96.95\% \\
\Dataset{progc} & \num{39.61} & \num{9.46} & \num{9.09} & 96.13\% & \num{8.87} & 93.72\% \\
\Dataset{progl} & \num{71.65} & \num{13.62} & \num{12.95} & 95.05\% & \num{12.43} & 91.24\% \\
\Dataset{progp} & \num{49.38} & \num{9.81} & \num{9.32} & 94.94\% & \num{8.99} & 91.58\% \\
\Dataset{world192.txt} & \num{2408.28} & \num{309.45} & \num{290.48} & 93.87\% & \num{269.63} & 87.13\% \\
\Dataset{xargs.1} & \num{4.23} & \num{1.34} & \num{1.30} & 97.02\% & \num{1.28} & 95.46\% \\
 \\
			\bottomrule
		\end{tabular}
	}\caption{Flexible parsing variants of LZ78 compared with the LZ78 factorization on the number of factors of the given datasets.
		The text length and the number of factors are given in thousands (divided by $10^{3}$, marked by [K]).
	}
\label{tabFlex}
\end{table}

\subsection{Substring Compression}\label{secEvalEight}

\begin{table}
\centerline{\begin{tabular}{l*{6}{r}}
\toprule
text & $n$ [M] & $z_{78}$ [M] & \iTernary{} & \iCics{}
\\\midrule
\Dataset{E.coli} & \num{4.64} & \num{0.49} & \num{0.92} & \num{28.50} \\
\Dataset{bible.txt} & \num{4.05} & \num{0.49} & \num{0.95} & \num{21.81} \\
\Dataset{dblp.xml.00001.1} & \num{104.86} & \num{3.94} & \num{1.40} & \num{35.23} \\
\Dataset{dblp.xml.00001.2} & \num{104.86} & \num{3.97} & \num{1.41} & \num{37.08} \\
\Dataset{dblp.xml.0001.1} & \num{104.86} & \num{3.95} & \num{1.43} & \num{40.22} \\
\Dataset{dblp.xml.0001.2} & \num{104.86} & \num{4.25} & \num{1.36} & \num{41.18} \\
\Dataset{fib41} & \num{267.91} & \num{0.42} & \num{13.56} & \num{289.62} \\
\Dataset{fib46} & \num{1836.31} & \num{1.52} & \num{33.73} & \num{1143.04} \\
\Dataset{rs.13} & \num{216.75} & \num{0.44} & \num{10.60} & \num{223.19} \\
\Dataset{tm29} & \num{268.44} & \num{0.62} & \num{9.61} & \num{230.44} \\
\Dataset{world192.txt} & \num{2.47} & \num{0.31} & \num{0.92} & \num{15.20} \\
 \\
			\bottomrule
		\end{tabular}
	}\caption{LZ78 factorization speed benchmark.
	  File sizes in the second column are given in megabytes. 
		The third column $z_{78}$ denotes the number of computed factors divided by $10^6$.
	  The two last columns measure the average delay in microseconds per factor for each respective factorization algorithm ($\mu s / z$).
	}
\label{tabBenchmark}
\end{table}

Curious about the practicality of our proposed solution with the suffix tree, 
we prepared a preliminary implementation based on the LZ78 factorization algorithm with suffix trees~\cite{fischer18lz}.
We call this implementation \iCics{} for \emph{computation in compressed space}.
For comparison, we took the LZ78 implementation in tudocomp~\cite{dinklage17tudocomp} using a ternary trie~\cite{bentley97multikeyquicksort} as the LZ78 trie implementation.
We call this implementation \iTernary{} in the following. See~\cite{fischer17lz78} for implementation details of \iTernary{}.
Both \iCics{} and \iTernary{} are implemented in C++ in the tudocomp framework available in the \texttt{cics} branch at \url{https://github.com/tudocomp/tudocomp}.
As \ST{} implementation, we use the class \texttt{cst\_sada} of the SDSL library~\cite{gog14sdsl},
a C++ implementation of the compressed suffix tree of \cite{sadakane07compressed}.

In what follows, we benchmark the time needed for the substring compression problem of LZ78 when the entire text should be compressed.
In this setting, \iCics{} first builds \ST{} at the precomputation time, while starting the factorization at query time.
We hence omit the construction of \ST{} in the time benchmark.
Nevertheless, we observe in \cref{tabBenchmark} that \iTernary{} processes a factor faster on average on all datasets.

Reasons for the bad performance of \iCics{} is the slow performance of queries such as \parent{}, \strdepth{}, and \fnSelectLeaf{}.
All these methods require in the \texttt{cst\_sada} implementations unpredictable long jumps in memory because the underlying implementation heavily depends on rank/select support data structures.
Unfortunately, memoizing the results of these calls in dynamic lookup tables did not give expected speedups.
Despite the bad performance, we are positive that \ST{} implementations specifically addressing the experienced bottlenecks in answering these queries will improve the performance of \ST{}-based factorization algorithms.

\section{Outlook}\label{secOutlook}
An open problem is whether we can compute LZD or LZMW in \Oh{z} space within \Oh{n} time,
which would be optimal in the general case.
We further wonder about the minimum required space for representing the $\LPF$ tables of all suffixes of $T$.

\myblock{LZD Derivates}
The idea of LZD spawns new research directions.
\begin{itemize}
	\item
	The first question would be whether it makes sense to generalize LZD to refer to at most $k$ factors instead of at most two. By doing so, the string lengths in the parse dictionary can grow much faster.
	It seems that the computational time complexity then also linearly depends on $k$.
	\item
	A problem for the optimality of LZD is that it is not prefix-free,
	which is a property that can be exploited to find strings for which LZD gives a grammar that is
	larger than the size of the smallest grammar by a factor of $\Om{n^3}$~\cite{badkobeh17two}.
	Also note that the flexible parsing applied to LZ78 or LZW makes both parsings optimal with respect to the minimum number of factors, 
	which however does not apply to LZD or LZMW, since both parsings are not prefix-closed in general.
	\item
	It is possible to extend LZD to a collage system~\cite{kida03collage},
	where we are also allowed to select prefixes of any element in the dictionary as a factor.
	Such an extension, paired with the flexible parsing, may give better lower bounds on 
	the sizes (now with respect to the smallest collage system), and might be also worth to try in practice.
\end{itemize}

\myblock{Other Factorizations}
It is interesting to study substring compression queries or internal queries for other factorizations such as
block palindrome factorizations~\cite{goto18block} or repetition factorizations~\cite{inoue16computing}.

\paragraph{Acknowledgements}
This research was supported by JSPS KAKENHI with grant number \texttt{JP23H04378} and the Research Grant for Young Scholars Funded by Yamanashi Prefecture with grant number \texttt{2291}.

\bibliographystyle{abbrvnat}

\clearpage
\appendix

\end{document}